\theoremstyle{definition}
\newtheorem{definition}{Definition}
\newtheorem{example}{Example}
\theoremstyle{plain}
\DeclareMathOperator{\p}{\mathbb{P}}
\DeclareMathOperator{\Dom}{Dom}
\DeclareMathOperator{\Rows}{Rows}
\newcommand{\js}[1]{\ifdraft{\textcolor{blue}{JS: #1}}\xspace}
\newcommand{\priv}{per-record\xspace}
\begin{document}
\title{Privately Answering Queries on Skewed Data via Per Record Differential Privacy}

\author{Jeremy Seeman$^{\star}$, William Sexton$^{\star}$, David Pujol$^{\star}$, Ashwin Machanavajjhala$^{\star}$}
\affiliation{%
  \institution{$^{\star}$ Tumult Labs, Durham, NC}
}
\email{{jeremy.seeman, william.sexton, david.pujol, ashwin}@tmlt.io}

\begin{abstract}
    We consider the problem of the private release of statistics (like aggregate payrolls) where it is critical to preserve the contribution made by a small number of outlying large entities. We propose a privacy formalism, per-record zero concentrated differential privacy (PRzCDP), where the privacy loss associated with each record is a public function of that record's value. Unlike other formalisms which provide different privacy losses to different records \cite{ebadi_differential_2015,jorgensen_2015_conservative}, PRzCDP's privacy loss depends explicitly on the confidential data. We define our formalism, derive its properties, and propose mechanisms which satisfy PRzCDP that are uniquely suited to publishing skewed or heavy-tailed statistics, where a small number of records contribute substantially to query answers. This targeted relaxation helps overcome the difficulties of applying standard DP to these data products.
\end{abstract}

\maketitle

\section{Introduction} 
We consider the problem of releasing private aggregate statistics on data with highly skewed attributes. These kinds of data occur frequently in practice, for example the Census Bureau's County Business Patterns (CBP) dataset \cite{cbp}, USDA's Census of Agriculture \cite{agcensus}, and many IRS data products like the Corporation Sourcebook \cite{irstaxstats}. Moreover, the aggregate statistics are highly sensitive to contributions by a single (or small set of) units. For example, CBP takes business establishments as its unit of analysis, where it is common that one establishment (like a large retailer or hospital) contributes to the majority of the jobs in a rural area. Regardless of their large contributions, the privacy of these units is often protected under federal law \cite{title13,title26}. Current disclosure avoidance methods, both traditional and modern, either fail to provide strong privacy guarantees or high utility. Classical statistical disclosure limitation techniques like complementary cell suppression using the p\% rule and EZS noise \cite{OMalley2007PracticalCI, evans1996using} offer no formal privacy guarantee and can deterministically reveal information about large contributions. Differentially private (DP) techniques \cite{dwork_calibrating_2006,dwork_algorithmic_2013} that globally bound the contribution of any one unit to published statistics require that highly skewed data are either truncated or suppressed, resulting in unreasonably large bias or unreasonably large noise injected into published statistics. 

Table \ref{tab:intro_examples} gives one such data example. Suppose we wanted to execute the following SQL query using DP: 
\[\mbox{\small{SELECT SUM(Employees) FROM table1 WHERE Industry = 'Retail'}}\]
DP requires bounding the contribution of any one record to the summation, enforced by truncating large values to a clamping bound. Setting this bound too high will require unreasonably large additional noise for DP, whereas setting this bound too low will add unreasonably large bias by truncating record 4. In Section \ref{sec:exp_negative}, we further demonstrate how this problem makes global DP methods ill-suited for summations on skewed data.

\begin{table}[t]
    \centering
    \begin{tabular}{|c|c|c|}
        \hline
         ID& Industry & Employees  \\
         \hline
         1 & Retail & 5 \\
         \hline
         2 & Retail & 5 \\
         \hline
         3 & Retail & 10 \\
         \hline
         4 & Retail & 1000 \\
         \hline
         5 & Technology & 10000 \\
         \hline
         6 & Services & 5 \\
         \hline
         7 & Hospitality & 5 \\
         \hline

    \end{tabular}
    \caption{Sample of skewed data containing a subset of a much larger collection of rows.}
    \label{tab:intro_examples}
\end{table}

Problems like these affect many high-sensitivity queries, where mitigating the effect of one record can have exorbitant effects on utility. For example, DP partition selection algorithms \cite{desfontaines_differentially_2022} used in keyset selection for group-by queries can neglect the impact of individual records with important, outlying attributes. In Table \ref{tab:intro_examples}, if we were to execute the query ``SELECT Count(*) FROM Table1 GROUP BY Industry" using DP, the probability of including "Services" and "Technology" would be the same, despite the fact that the majority of employees in the dataset work in Technology. Issues like these stem from inherent connections between DP and robustness \cite{dworkPrivRobust,liu2021differential, asi2023robustness}, where DP techniques cannot successfully answer inherently non-robust queries.

\subsection{Contributions}

In these settings where standard DP tools do not work, we might consider a relaxation of DP that applies different privacy loss bounds to different units. For example, in Table \ref{tab:intro_examples}, we might allow additional privacy loss for record 5 (Technology establishment with 10000 employees) than record 1 (Retail establishment with 5 employees), especially because record 5 is an influential record which describes more individuals than record 1. However, existing techniques restrict how this mapping between records and privacy losses can occur. Personalized DP \cite{jorgensen_2015_conservative} requires that the mapping between units and their privacy losses is public knowledge, but we might require this mapping to non-trivially depend on confidential data. Alternatively, individual DP \cite{ebadi_differential_2015} requires that this mapping is confidential, but this limits our ability to be methodologically transparent or describe how privacy losses differ across units. 

We propose per-record zero-concentrated DP (PRzCDP), which aims to address these problems. We publicly release what we call a "policy function" $P$ that maps each possible (hypothetical) record to a maximum privacy loss that said record could incur. The form of $P$ can depend on the record's value, allowing different privacy losses for outlying units when necessary to maintain reasonable data utility. We next propose an algorithm methodology, called \textit{unit splitting}, which indirectly sets $P$ by executing traditional zCDP \cite{bun_concentrated_2016} algorithms on a preprocessed version of the data where influential records are split into sub-records. Our formalism and mechanisms are ideal for non-interactive query workloads involving highly skewed data; as a result, our approach is a candidate methodology for data products like CBP \cite{cbp-demo}. Our contributions are as follows: 

\begin{itemize}
    \item In Sections \ref{sec:def} and \ref{sec:properties}, we formally define PRzCDP and demonstrate its formal properties, such as sequential and parallel composition.
    \item In Section \ref{sec:unit splitting}, we propose unit splitting, a pre-processing step which allows us to compute DP mechanisms on the output so that the final results satisfy $P$-PRzCDP. 
    \item In Section \ref{sec:experiments}, we apply these techniques to three datasets: one simulated heavy-tailed dataset, one USDA dataset, and CBP synthetic data provided to us by the U.S. Census Bureau. We empirically demonstrate how small changes in privacy loss significantly improve utility for these skewed datasets.
\end{itemize}

\subsection{Related work}
Among the hundreds of existing formal privacy definitions \cite{desfontaines_sok_2020}, many formalisms aim to provide privacy guarantees which differ across units \cite{alaggan_2016_heterogeneous,jorgensen_2015_conservative,ebadi_differential_2015,ghosh_selling_2015} or realized datasets \cite{papernot2018scalable,wang_per-instance_2019}. These are often used in service of broader global DP goals, such as publishing data-dependent privacy guarantees \cite{redberg2021privately} or establishing privacy filters for adaptive composition \cite{feldman_individual_2021,whitehouse2022fully}. Our work differs in a few key areas. First, we consider explicit dependencies between individual privacy loss parameters and confidential record values, relaxing strong assumptions made about this relationship in previous work \cite{alaggan_2016_heterogeneous,jorgensen_2015_conservative,ebadi_differential_2015,ghosh_selling_2015,papernot2018scalable,wang_per-instance_2019,redberg2021privately}. Second, we consider data-dependent privacy guarantees without global bounds on privacy loss; we consider cases where the policy loss can become arbitrarily large for certain records. Such relaxations are necessary to address problems that arise with skewed data, for which global DP guarantees cannot provide reasonable privacy loss and utility simultaneously.
We provide a detailed comparison between PRzCDP and closely related definitions in Section \ref{sec:priv_relation}.

\section{Preliminaries}
\label{sec:prelim}
\subsection{Data Model}
We assume a single table schema $R(a_1, a_2 \dots a_d)$ where $\mathcal{A} = \{a_1, a_2 \dots a_d\}$ denotes the set of attributes $R$. Each attribute in $a_i$ has domain, $\Dom(a_i)$, which need not be finite or bounded.
The full domain of $R$ is $\Dom(R) = \Dom(a_1) \times \Dom(a_2) \times \dots  \Dom(A_d)$.

A database $D$ is an instance of relation $R$. $D$ is a multi-set whose elements are tuples in $\Dom(R)$, i.e. a tuple can be written as $r = (x_1, \ldots, x_d)$ where $x_i \in \Dom(a_i)$. 
The number of tuples in $D$ is denoted as $|D| = n$.
\subsection{Zero-Concentrated Differential Privacy}
Informally, a randomized mechanism $M$ satisfies DP if the output distribution of the mechanism does not change too much with the addition or removal of a single unit's record. We focus on a variant of DP called $\rho$-Zero-Concentrated Differential Privacy (zCDP)\cite{bun_concentrated_2016}, but all the following results could similarly be adapted for $\epsilon$-DP \cite{dwork_calibrating_2006}. This DP formulation bounds the R\'enyi Divergence of output distributions induced by changes in a single record. We first begin by defining neighboring databases.

\begin{definition}[Neighboring Databases]
Two databases $D$ and $D'$ are considered neighboring databases if $D$ and $D'$ differ by adding or removing at most one row. We denote this relationship by $D' \approx D$. 
\end{definition}

We call the maximum change to a function due to the removal or addition of a single row the \textit{sensitivity} of the function. We will often refer to a single row in a database as a ``unit'' and use the terms interchangeably.

\begin{definition}[$\ell_2$-Sensitivity]\label{def:sensitivity}
Given a vector function $f$, the sensitivity of $f$ is $sup_{D' \approx D}{|f(D)-f(D')|_2}$, where $|\cdot|_2$ is the $\ell_2$-norm, and is denoted by $\Delta$.
\end{definition}
From here, we can define the formal notion of privacy under zCDP.
\begin{definition}[Zero-Concentrated Differential Privacy]\label{def:zCDP}
A rand\-om\-ized mechanism $M$ satisfies $\rho$-zCDP for $ \rho \geq 0$ if, for any two neighboring databases $D$ and $D'$ and for all values of $\alpha \in (1, \infty)$:
\begin{equation*}
D_\alpha(M(D) \| M(D')) \leq \rho \alpha 
\end{equation*}
where $D_\alpha ( \cdot \| \cdot )$ is the R\'enyi divergence of order $\alpha$ between two probability distributions. \par 
\end{definition}
zCDP ensures that no unit contributes too much to the final output of the mechanism by bounding the difference with or without their particular record. The parameter $\rho$ is often called the privacy loss and refers to the amount of information that can be learned about any particular individual. A high value of $\rho$ means weaker privacy protection, while a lower value of $\rho$ denotes a stronger privacy protection.  \par

zCDP has a number of properties that are often used to construct more complex mechanisms. These are composability, post-processing invariance, and group privacy. Composition allows for multiple private mechanisms to compose together to create a large mechanism which still satisfies zCDP. 
\begin{theorem}[zCDP Sequential Composition \cite{bun_concentrated_2016}] \label{thrm:zCDP-sequential}
Let $M_1, M_2$ be randomized mechanisms which satisfy $\rho_1-$zCDP and $\rho_2-$zCDP respectively. Then the mechanism $M'(D) = (M_1(D), M_2(D))$ satisfies $(\rho_1 + \rho_2)-$zCDP.
\end{theorem}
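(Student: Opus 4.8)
The plan is to unwind the definition of $\rho$-zCDP and reduce the claim to a single fact about the additivity of R\'enyi divergence across independent coordinates. Fix any pair of neighboring databases $D \approx D'$ and any order $\alpha \in (1,\infty)$. By Definition \ref{def:zCDP}, it suffices to establish
\begin{equation*}
D_\alpha\big(M'(D) \| M'(D')\big) \leq (\rho_1 + \rho_2)\alpha.
\end{equation*}
Since $M_1$ and $M_2$ are run on independent internal randomness, for each fixed input the output $(M_1(D), M_2(D))$ is distributed as the product measure $M_1(D) \otimes M_2(D)$, and likewise for $D'$. So the left-hand side above is the R\'enyi divergence between two product distributions.

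The key step is the additivity of R\'enyi divergence over product measures: for any $P_1, P_2, Q_1, Q_2$,
\begin{equation*}
D_\alpha(P_1 \otimes P_2 \| Q_1 \otimes Q_2) = D_\alpha(P_1 \| Q_1) + D_\alpha(P_2 \| Q_2).
\end{equation*}
First I would prove this directly from the integral definition $D_\alpha(P \| Q) = \frac{1}{\alpha-1}\log \E_{x \sim Q}\left[\left(\frac{dP}{dQ}(x)\right)^{\alpha}\right]$: on a product space the Radon--Nikodym derivative factorizes as $(dP_1/dQ_1)(dP_2/dQ_2)$, so by independence the expectation over $Q_1 \otimes Q_2$ splits into a product of two one-dimensional expectations, and the outer logarithm converts that product into the claimed sum. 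Applying this identity with $P_i = M_i(D)$ and $Q_i = M_i(D')$, then invoking the hypothesis that each $M_i$ is $\rho_i$-zCDP, yields
\begin{align*}
D_\alpha\big(M'(D) \| M'(D')\big) &= D_\alpha\big(M_1(D)\|M_1(D')\big) + D_\alpha\big(M_2(D)\|M_2(D')\big) \\
&\leq \rho_1\alpha + \rho_2\alpha = (\rho_1+\rho_2)\alpha.
\end{align*}
Because $D$, $D'$, and $\alpha$ were arbitrary, $M'$ satisfies $(\rho_1+\rho_2)$-zCDP.

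The hard part here is not any heavy computation but rather stating the product-additivity step cleanly and tracking the independence assumption on which it rests: the factorization of the density ratio requires that $M_1$ and $M_2$ draw their randomness independently, so the statement should be read non-adaptively, with $M_2$ a function of $D$ alone and not of the realized output of $M_1$. If one instead wanted the fully adaptive composition of \cite{bun_concentrated_2016}, this step would have to be replaced by a chain rule for R\'enyi divergence expressed through the conditional distribution of the second output given the first; I would avoid that complication, since the theorem as written takes both mechanisms as functions of $D$ and the product argument is both sufficient and transparent.
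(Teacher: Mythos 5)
Your proof is correct: product-measure additivity of R\'enyi divergence is the standard argument for non-adaptive zCDP composition, and you rightly note that the theorem as stated (both mechanisms taking $D$ alone as input, with independent randomness) needs only this and not the adaptive chain-rule machinery of \cite{bun_concentrated_2016}. The paper offers no proof of this theorem itself---it is cited as background---but its appendix proof of the per-record analog (Lemma~\ref{lem:PRzCDP-seq}) opens with exactly the additivity decomposition you use, asserted there without the product-factorization justification you supply, so your approach matches the paper's.
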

Additionally, if a mechanism is run on multiple disjoint sections of the database, private mechanisms compose with no additional privacy loss. 
\begin{theorem}[zCDP Parallel Composition \cite{bun_concentrated_2016}] \label{thrm:zCDP-parallel}
Let $M_1, M_2$ be randomized mechanisms which satisfy $\rho_1-$zCDP and $\rho_2-$zCDP respectively. Let $D_1, D_2$ be two disjoint subsets of a database, $D$. The mechanism $M'(D) = (M_1(D_1), M_2(D_2))$ satisfies  $\max(\rho_1 , \rho_2)-$zCDP.

\end{theorem}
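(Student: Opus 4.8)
The plan is to reduce parallel composition to the single observation that distinguishes it from sequential composition: because $D_1$ and $D_2$ are disjoint, any single row added to or removed from $D$ lands in exactly one of the two cells. First I would fix neighboring databases $D \approx D'$ and note that the fixed partition into disjoint subsets is induced on both of them, writing the pieces as $D_1, D_2$ and $D_1', D_2'$. Since $D$ and $D'$ differ by one row and that row belongs to a single cell, exactly one of the two cells changes; without loss of generality, suppose $D_1 \approx D_1'$ are neighboring while $D_2 = D_2'$ are identical. The symmetric case, in which the differing row lies in the second cell, is handled identically.

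Next I would use the assumption that $M_1$ and $M_2$ draw independent internal randomness, so that the joint output $M'(D) = (M_1(D_1), M_2(D_2))$ has a product distribution. The key analytic fact is that R\'enyi divergence is additive over product measures, giving
\[
D_\alpha(M'(D) \| M'(D')) = D_\alpha(M_1(D_1) \| M_1(D_1')) + D_\alpha(M_2(D_2) \| M_2(D_2')).
\]
Because $D_2 = D_2'$, the second term is the divergence of a distribution from itself and therefore vanishes. For the first term, $D_1 \approx D_1'$ together with the $\rho_1$-zCDP guarantee of $M_1$ yields $D_\alpha(M_1(D_1) \| M_1(D_1')) \leq \rho_1 \alpha$. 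Combining, $D_\alpha(M'(D) \| M'(D')) \leq \rho_1 \alpha \leq \max(\rho_1, \rho_2)\alpha$ for every $\alpha \in (1, \infty)$; the symmetric case gives the bound $\rho_2 \alpha$, and taking the maximum over both cases establishes $\max(\rho_1, \rho_2)$-zCDP.

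The main obstacle is less a hard estimate than pinning down the two supporting facts precisely. The first is formalizing the claim that the changed row lies in exactly one cell, which requires the partition $D \mapsto (D_1, D_2)$ to be defined so that adding or removing a row cannot simultaneously perturb both cells; this holds here because the subsets are disjoint, but it is worth stating explicitly. The second is the additivity of R\'enyi divergence over products, which in turn relies on independence of the mechanisms' randomness: if the randomness were shared, the product structure would fail and the clean decomposition above would no longer hold. Once both are in place the bound is immediate, and no truncation or contribution-bounding argument is needed, reflecting that a single unit touches only one side of the partition.
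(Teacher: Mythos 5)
Your proposal is correct and takes essentially the same approach the paper itself uses: while the paper only cites this theorem from Bun and Steinke, its proof of the analogous PRzCDP parallel composition result (Lemma~\ref{lem:PRzCDP-par}) is exactly your argument---additivity of the R\'enyi divergence over the product output, vanishing of the term for the cell that does not change, and the single-cell zCDP bound for the cell that does. Your explicit remarks about independent mechanism randomness and the partition being induced by a fixed rule on records are sound points of rigor that the paper leaves implicit.
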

zCDP also allows for arbitrary post-processing without additional privacy loss. 
\begin{theorem}[zCDP Post-processing \cite{bun_concentrated_2016}] \label{thrm:zCDP-postprocessing}
Let $M$ be a randomized mechanism which satisfies $\rho-$ zCDP. Let $M'(D) = f(M(D))$ for some arbitrary function $f$. Then $M'$ satisfies $\rho-$zCDP.
\end{theorem}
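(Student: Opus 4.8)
The plan is to reduce the claim to a single fact: R\'enyi divergence cannot increase under (possibly randomized) post-processing, the so-called data processing inequality. Fix an arbitrary pair of neighboring databases $D \approx D'$ and an arbitrary order $\alpha \in (1, \infty)$; by Definition \ref{def:zCDP} it suffices to establish $D_\alpha(M'(D) \,\|\, M'(D')) \le \rho \alpha$. The key observation is that the output distribution of $M'(D) = f(M(D))$ is exactly the image of the distribution of $M(D)$ under $f$: if $f$ is deterministic this is the pushforward measure, and if one allows $f$ to be randomized it is the composition of the law of $M(D)$ with the Markov kernel induced by $f$. Crucially, $f$ acts identically on the $D$ and $D'$ branches, so it introduces no new dependence on the data.

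Granting the data processing inequality $D_\alpha(f(M(D)) \,\|\, f(M(D'))) \le D_\alpha(M(D) \,\|\, M(D'))$, the conclusion is immediate: the right-hand side is at most $\rho\alpha$ because $M$ satisfies $\rho$-zCDP, and since $\alpha$ and the neighboring pair were arbitrary, $M'$ inherits $\rho$-zCDP. Thus essentially all of the content lives in justifying the inequality itself, with the surrounding argument being a one-line substitution.

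To prove the data processing inequality --- the step I expect to be the main obstacle --- I would exponentiate and work with the Hellinger-type integral $\exp\!\big((\alpha-1)\, D_\alpha(P \,\|\, Q)\big) = \int (dP/dQ)^\alpha \, dQ$. Since $\alpha > 1$, the map $t \mapsto \tfrac{1}{\alpha-1}\log t$ is increasing, so it suffices to show that this integral does not increase when $f$ is applied. Recognizing the integral as the $f$-divergence generated by the convex function $\phi(t) = t^\alpha$, I would invoke the joint convexity of the perspective map $(p,q) \mapsto q\,\phi(p/q)$ together with Jensen's inequality to bound the contribution of each output value. Concretely, the deterministic case reduces to the log-sum inequality applied to each preimage $f^{-1}(\{z\})$, while the randomized case follows from the same convexity argument applied to the mixing kernel. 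Some care is needed with measure-theoretic technicalities (the choice of dominating measure and null sets on which $dP/dQ$ is undefined), but these do not affect the inequality, so monotonicity of $\tfrac{1}{\alpha-1}\log(\cdot)$ transfers the bound back to $D_\alpha$ and completes the argument.
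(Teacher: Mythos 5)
Your proof is correct: reducing post-processing to the data processing inequality for R\'enyi divergence, and then establishing that inequality by viewing $\exp\left((\alpha-1)D_\alpha(P\,\|\,Q)\right)$ as the $f$-divergence generated by $\phi(t)=t^\alpha$ and invoking joint convexity of the perspective map, is exactly the canonical argument. The paper itself gives no proof of this statement --- it is a preliminary imported by citation from \cite{bun_concentrated_2016} --- and your argument coincides with the standard proof in that reference, so there is nothing to reconcile.
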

Due to the composition and post-processing theorems, most zCDP mechanisms are built out of simple primitive mechanisms such as the Gaussian mechanism \cite{bun_concentrated_2016} which are then post-processed and combined to create more complex mechanisms.
\begin{definition}[Gaussian Mechanism \cite{bun_concentrated_2016}]
\label{def:gaussian_mech}
Let $q$ be a sensitivity $\Delta$ query. Consider the mechanism $M$ that on input $D$ releases a sample from $\mathcal{N}(q(D),\sigma^2)$. Then $M$ satisfies$ \frac{\Delta ^2}{2\sigma^2}$-zCDP.
\end{definition}

Another property of formally private mechanisms is the notion of group privacy. That is, that the mechanism not only protects the unit but also protects arbitrary groups of units with a privacy loss that scales in the size of the group.
\begin{theorem}[zCDP Group Privacy]\label{thrm:zCDP-group}
Let $M$ be a randomized mechanism which satisfies $\rho-$zCDP. Then $M$ guarantees $(k^2\rho)-$zCDP for groups of size $k$. That is, for every set of neighboring databases $D,D'$ differing in up to $k$ entries, and $\alpha \in (1,\infty)$ we have the following.
\begin{equation*}
D_\alpha(M(D) \| M(D')) \leq (k^2\rho)\cdot \alpha 
\end{equation*}
\end{theorem}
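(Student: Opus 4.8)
The plan is to reduce the group statement to a chain of single-record steps and then stitch the per-step R\'enyi bounds together with a weak triangle inequality for R\'enyi divergence. First I would fix databases $D,D'$ whose symmetric difference has size $j \le k$ and build a hybrid chain $D = D_0 \approx D_1 \approx \cdots \approx D_j = D'$ in which consecutive databases differ by a single added or removed row (remove the rows of $D \setminus D'$ one at a time, then add the rows of $D' \setminus D$ one at a time). Writing $P_i = M(D_i)$, the hypothesis that $M$ is $\rho$-zCDP supplies, crucially, a bound at \emph{every} order, namely $D_\gamma(P_{i-1} \| P_i) \le \rho\gamma$ for all $\gamma \in (1,\infty)$ and all $i$. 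This ``for all orders'' feature is what will make the chaining work, since the triangle inequality below inflates the order.

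The key tool is a weak triangle inequality for R\'enyi divergence: for conjugate exponents $p,q > 1$ with $1/p + 1/q = 1$, and densities $p(\cdot),q(\cdot),r(\cdot)$ of $P,Q,R$ against a dominating measure $\mu$,
\begin{equation*}
D_\alpha(P \| R) \le \frac{\alpha - 1 + 1/q}{\alpha - 1}\, D_{\alpha p}(P \| Q) + D_{(\alpha-1)q + 1}(Q \| R).
\end{equation*}
I would derive this by writing $\int p^\alpha r^{1-\alpha}\,d\mu = \int \big(p^\alpha q^{-c}\big)\big(q^{c} r^{1-\alpha}\big)\,d\mu$ with $c = \alpha - 1/p$ and applying H\"older's inequality with exponents $p,q$; a short check shows the first factor becomes a R\'enyi divergence $D_{\alpha p}(P\|Q)$ of inflated order $\alpha p > \alpha$ and the second becomes $D_{(\alpha-1)q+1}(Q\|R)$. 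Note that because the per-link zCDP bound holds at all orders, both inflated orders are still controlled by $\rho \cdot (\text{order})$.

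Finally I would run an induction on the chain length. Let $c_j$ be the smallest constant with $D_\alpha(P_0 \| P_j) \le c_j\,\rho\,\alpha$ for all $\alpha$; the base case is $c_1 = 1$. Applying the triangle inequality with $P = P_0$, $Q = P_{j-1}$, $R = P_j$, and substituting the inductive bound at order $\alpha p$ together with the single-link bound at order $(\alpha-1)q+1$, gives
\begin{equation*}
D_\alpha(P_0 \| P_j) \le \frac{\alpha - 1 + 1/q}{\alpha - 1}\, c_{j-1}\,\rho\,\alpha p + \rho\big((\alpha-1)q + 1\big).
\end{equation*}
Optimizing the free exponent (taking $q-1 \approx \sqrt{c_{j-1}}$, so that the two contributions balance) makes the leading coefficient in $\rho\alpha$ equal to $\big(c_{j-1} + 2\sqrt{c_{j-1}} + 1\big) = (\sqrt{c_{j-1}}+1)^2$, i.e. the recursion $c_j = (\sqrt{c_{j-1}}+1)^2$, which solves to $c_j = j^2$; monotonicity in $j \le k$ then yields the claimed $D_\alpha(M(D)\|M(D')) \le k^2\rho\,\alpha$. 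This matches the intuition that privacy-loss standard deviations (square roots of the zCDP parameters) add under worst-case correlation, so $k$ unit steps combine to $(k\sqrt\rho)^2 = k^2\rho$. I expect the main obstacle to be the exponent bookkeeping in this last step: confirming that the lower-order corrections in the optimization do not spoil the exact constant $k^2$, which relies precisely on the fact that the per-link guarantee is available at the inflated orders $\alpha p$ and $(\alpha-1)q+1$ rather than at a single fixed order.
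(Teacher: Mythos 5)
Your skeleton is the standard one for this theorem (which the paper itself does not reprove --- it is imported from Bun--Steinke --- but whose argument the paper replays in its appendix proofs of Lemmas~\ref{lem:PRzCDP-group} and~\ref{lem:advanced-PRzCDP-par}): chain through single-record hybrids, apply a H\"older-type weak triangle inequality for R\'enyi divergence, and induct. Your triangle inequality is correct and your H\"older derivation of it checks out; indeed, under the substitution $k\alpha = (\alpha-1)q+1$ it is exactly the ``Lemma 5.2'' form the paper cites, with the parameter $k$ of that form related to your conjugate exponents by $k = (q(\alpha-1)+1)/\alpha$.

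The gap is in the optimization step, and it is concrete: you propose the fixed, $\alpha$-independent choice $q-1 \approx \sqrt{c_{j-1}}$. With any fixed conjugate pair $(p,q)$, the first term of your inductive bound equals
\begin{equation*}
c_{j-1}\,\rho\,\alpha\,\frac{q(\alpha-1)+1}{(q-1)(\alpha-1)},
\end{equation*}
which diverges like $1/(\alpha-1)$ as $\alpha \to 1^{+}$, while the target $c_j\rho\alpha$ stays bounded; hence no finite $c_j$ can satisfy the claimed inequality for \emph{all} $\alpha \in (1,\infty)$ under a fixed $q$. The repair is to use the fact that your inequality holds for every conjugate pair and optimize $q$ separately for each $\alpha$: the exact minimizer is $q(\alpha) = 1 + \sqrt{c_{j-1}}\,\alpha/(\alpha-1)$, of which your choice is only the large-$\alpha$ limit. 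Substituting this value makes everything cancel exactly: writing $s = c_{j-1}$, the first term becomes $\sqrt{s}(1+\sqrt{s})\rho\alpha$, the second becomes $(1+\sqrt{s})\rho\alpha$, and the total is $(1+\sqrt{s})^2\rho\alpha$ with no lower-order corrections --- so the recursion $c_j = (\sqrt{c_{j-1}}+1)^2$ and the conclusion $c_j = j^2 \le k^2$ hold exactly, which resolves the bookkeeping worry in your final sentence. Equivalently, and more cleanly, pass to the $k$-parametrization: your $\alpha$-dependent optimal $q$ corresponds to the \emph{constant} $k = 1+\sqrt{c_{j-1}} = j$, precisely the fixed choice made in Bun--Steinke's proof and in the paper's appendix, where the identity $\frac{k\alpha}{k\alpha-1}\cdot (k-1)^2\rho\cdot\frac{k\alpha-1}{k-1} + k\alpha\rho = k^2\rho\alpha$ closes the induction for every $\alpha$ simultaneously.
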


While all the results that follow will use zCDP, they still hold in the context of pure $\epsilon$-DP.

\section{Per-Record Differential Privacy}
\label{sec:def}
Here, we introduce Per-Record Zero-Concentrated Differential Privacy (PRzCDP), a relaxation of DP designed for skewed data tasks. 
This takes the form of a privacy guarantee that varies as a function of the record's confidential value; for example, when records are real-valued positive numbers, we can consider privacy loss bounds that grow monotonically as a function of these record values. The key feature of PRzCDP is a \textit{record dependent policy function} which can be publicly released and analytically captures the privacy loss of a hypothetical record.

\begin{definition}[Record-dependent policy function] \label{def:policy}
A record-dependent policy function $P: \mathcal{T} \rightarrow \mathbb{R}_{\geq 0}$ denotes a maximum allowable privacy loss associated with a particular record value $r \in \mathcal{T}$, where $\mathcal{T}$ is the universe of possible records. 
\end{definition}

This record-dependent policy function differs from other individual privacy frameworks in that the parameter value $P(r)$ itself depends on the confidential record values $r$. We allow the functional form of the policy function $P(\cdot)$ to be made public, but the value of the policy function $P(r)$ for any record $r$ in the confidential database cannot be made public. Record-dependent policy functions are inspired by and generalize binary policy functions of \cite{doudalis_one-sided_2017}. Given a policy function,  Per-Record Zero-Concentrated Differential Privacy is defined as follows.

\begin{definition}[$P$-\priv zero-Concentrated DP ($P$-PRzCDP)] \label{def:PRzCDP}
Let $M$ be a randomized algorithm which outputs a random variable $Y$ over a range $(\mathcal{Y}, \mathcal{F}_Y)$, where $\mathcal{F}_Y$ is an appropriately chosen $\sigma$-algebra. $M$ satisfies $P$-\priv zero-concentrated differential privacy ($P$-PRzCDP) iff $\forall D, D' \in \mathcal{D}$:
$$
D \ominus D' = \{ r \} \implies d_\alpha(M(D) || M(D')) \leq \alpha P(r) \quad \forall \alpha \in (1, \infty)
$$
where $\mathcal{D}$ is the input database space and $\ominus$ denotes symmetric difference.
\end{definition}
In this definition, the privacy loss associated with each record scales according to the policy function, as opposed to having equal privacy loss for all records. Traditional zCDP can be stated as a special case of $P$-PRzCDP where the policy function is a constant. 
\begin{example}
    Consider Table~\ref{tab:ex_splitting}(a) and a policy function $P(r) = \rho \left\lceil \frac{r[Employees]}{50} \right\rceil$, where $\rho$ is a privacy parameter. Under this policy function, Establishment $1$ would receive $3\rho$ privacy loss since they have $150$ employees, while Establishment 2 would receive $\rho$ privacy loss since it only has $50$ employees. Establishment $5$ would still receive $\rho$ privacy loss even though it has less than $50$ employees.
\end{example}

\section{Properties of Per-record differential privacy}
\label{sec:properties}
We demonstrate here that PRzCDP satisfies the traditional properties often associated with Differential Privacy, as well as its variants. First, PRzCDP is closed under post-processing, in that any data independent function computed on the output of a mechanism which satisfies PRzCDP also satisfies PRzCDP.

\begin{lemma}[Closure under post-processing]
\label{lem:PRzCDP-post-processing}
    Given $M:\mathcal{T}^* \to \mathcal{Y}$, a $P$-PRzCDP  mechanism $M$, and any function $f$, it is the case that $f\circ M$ is also $P$-PRzCDP. 
\end{lemma}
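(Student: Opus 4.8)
The plan is to reduce the claim to the data processing inequality for R\'enyi divergence, exactly as in the standard zCDP post-processing argument (Theorem~\ref{thrm:zCDP-postprocessing}), while tracking the fact that the PRzCDP bound is indexed by the \emph{distinguishing record} rather than by the output.

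First I would fix an arbitrary pair of neighboring databases $D, D' \in \mathcal{D}$ with $D \ominus D' = \{ r \}$, and an arbitrary $\alpha \in (1,\infty)$. The goal is to bound $d_\alpha\big((f\circ M)(D) \,\|\, (f\circ M)(D')\big)$ by $\alpha P(r)$. The crucial structural observation is that $f$ acts only on the output space $\mathcal{Y}$: it leaves the input databases, and hence the distinguishing record $r$, untouched. Consequently the governing policy value $P(r)$ is identical for $M$ and for $f\circ M$, which is exactly what lets us reuse the \emph{same} policy function $P$ in the conclusion rather than some modified bound.

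Next I would invoke the data processing inequality: for any (possibly randomized) measurable map $f$ and any two laws, $d_\alpha(f(X)\|f(Y)) \le d_\alpha(X\|Y)$. Viewing $f$ as a Markov kernel from $(\mathcal{Y},\mathcal{F}_Y)$ to the output space of $f$ and applying it to the distributions of $M(D)$ and $M(D')$ yields $d_\alpha\big((f\circ M)(D)\|(f\circ M)(D')\big) \le d_\alpha\big(M(D)\|M(D')\big)$. Since $M$ is $P$-PRzCDP and $D\ominus D'=\{r\}$, Definition~\ref{def:PRzCDP} gives $d_\alpha(M(D)\|M(D')) \le \alpha P(r)$. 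Chaining the two inequalities establishes the bound for this particular $D, D', \alpha$, and since all three were arbitrary, $f\circ M$ is $P$-PRzCDP with the same policy function.

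I do not expect a genuine obstacle here, since the heavy lifting is delegated to the data processing inequality, a standard property of R\'enyi divergence. The only point requiring care is measurability: I would assume $f$ is measurable with respect to $\mathcal{F}_Y$ (equivalently, a well-defined kernel), so that $(f\circ M)(D)$ is a bona fide random variable and the kernel interpretation of the inequality applies. If $f$ is deterministic this is immediate; the randomized case follows by the same monotonicity. Notably, no structural assumption on $P$ beyond its being a fixed map $P:\mathcal{T}\to\mathbb{R}_{\ge 0}$ is needed, since $P$ is never evaluated at anything depending on the output of $f$.
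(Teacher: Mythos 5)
Your proof is correct: the paper itself provides no written proof of Lemma~\ref{lem:PRzCDP-post-processing} (it is the one property whose proof is omitted even from the appendix, being treated as immediate), and your argument—data processing inequality for R\'enyi divergence applied to the kernel $f$, plus the observation that the policy value $P(r)$ is indexed by the distinguishing record and hence unchanged by post-processing—is exactly the standard argument underlying zCDP post-processing (Theorem~\ref{thrm:zCDP-postprocessing}) that the paper implicitly relies on. Nothing further is needed.
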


PRzCDP satisfies \textit{basic adaptive} sequential composition in that two PRzCDP mechanisms with arbitrary policy functions (chosen prior to running any mechanism) compose together to satisfy PRzCDP in combination. 

\begin{lemma}[basic adaptive sequential composition for $P$-PRzCDP]
\label{lem:PRzCDP-seq}
Let $M_1$ satisfy $P_1$-PRzCDP and let $M_2$ satisfy $P_2$-PRzCDP. Then $M_3(D) = M_2 \left( M_1(D), D \right)$ satisfies $\left(P_1(r) + P_2(r) \right)$-PRzCDP.
\end{lemma}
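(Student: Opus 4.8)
The plan is to bound the Rényi divergence of the composed mechanism directly from its definition, exploiting the chain-rule factorization of the divergence across the two stages. Fix any neighboring pair $D, D'$ with $D \ominus D' = \{r\}$, and consider the \emph{joint} view $(Y_1, Y_2) = \left( M_1(D), M_2(M_1(D), D) \right)$; let $P$ and $Q$ denote its laws under $D$ and $D'$. Writing $P_1, Q_1$ for the marginals of $Y_1$ and $P_{2\mid 1}(\cdot \mid y_1), Q_{2\mid 1}(\cdot \mid y_1)$ for the conditionals of $Y_2$ given $Y_1 = y_1$, the crucial observation is that $P_{2\mid 1}(\cdot \mid y_1)$ is the law of $M_2(y_1, D)$ and $Q_{2\mid 1}(\cdot \mid y_1)$ is the law of $M_2(y_1, D')$ evaluated at the \emph{same} first argument $y_1$. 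Because $D$ and $D'$ still differ only in the record $r$ once $y_1$ is fixed, the $P_2$-PRzCDP guarantee of $M_2$ gives $d_\alpha\!\left(P_{2\mid 1 = y_1} \,\|\, Q_{2\mid 1 = y_1}\right) \le \alpha P_2(r)$ uniformly over $y_1$, while the $P_1$-PRzCDP guarantee of $M_1$ gives $d_\alpha(P_1 \| Q_1) \le \alpha P_1(r)$.

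The computation I would carry out is to expand the exponentiated divergence $\exp\!\left((\alpha - 1) d_\alpha(P \| Q)\right) = \E_{(y_1, y_2) \sim Q}\!\left[ \left( \tfrac{dP}{dQ} \right)^\alpha \right]$, factor the Radon--Nikodym derivative as $\tfrac{dP}{dQ} = \tfrac{dP_1}{dQ_1} \cdot \tfrac{dP_{2\mid 1}}{dQ_{2\mid 1}}$, and evaluate the inner conditional expectation over $y_2$ first. That inner expectation is exactly $\exp\!\left((\alpha - 1) d_\alpha(P_{2\mid 1 = y_1} \| Q_{2\mid 1 = y_1})\right)$, which is at most $\exp\!\left((\alpha - 1)\alpha P_2(r)\right)$ for every $y_1$; pulling this uniform factor outside leaves $\exp\!\left((\alpha - 1)\alpha P_2(r)\right) \cdot \E_{y_1 \sim Q_1}\!\left[ \left( \tfrac{dP_1}{dQ_1} \right)^\alpha \right]$, and the remaining factor equals $\exp\!\left((\alpha - 1) d_\alpha(P_1 \| Q_1)\right) \le \exp\!\left((\alpha - 1)\alpha P_1(r)\right)$. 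Taking logarithms and dividing by $(\alpha - 1)$ yields $d_\alpha(P \| Q) \le \alpha\left(P_1(r) + P_2(r)\right)$, so the joint view is $(P_1 + P_2)$-PRzCDP. Since the actual output $M_3(D) = M_2(M_1(D), D)$ is a (deterministic) function of this joint view, closure under post-processing (Lemma~\ref{lem:PRzCDP-post-processing}) transfers the same bound to $M_3$.

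The main obstacle I anticipate is making the conditioning rigorous on the general, possibly non-discrete range $\mathcal{Y}$: the chain-rule factorization of $\tfrac{dP}{dQ}$ and the interchange of the nested expectations require a disintegration of $Q$ into marginal and conditional parts together with absolute continuity of $P$ with respect to $Q$ at each level. I would handle this by invoking the standard conditional form of the Rényi divergence for joint laws used in the zCDP composition analysis of \cite{bun_concentrated_2016}, and by establishing the uniform-in-$y_1$ bound on the conditional divergence \emph{before} pulling it outside the outer expectation. A secondary point worth stating explicitly is that the same record $r$ witnesses the symmetric difference at both stages, so the two per-record guarantees are evaluated at a common $r$ and therefore add; as this argument holds for every neighboring pair and every $\alpha \in (1, \infty)$, the claimed $\left(P_1(r) + P_2(r)\right)$-PRzCDP guarantee follows.
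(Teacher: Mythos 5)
Your proof is correct, and it is in fact more careful than the paper's own argument. The paper's proof asserts, as an exact equality, that
\begin{align*}
d_\alpha\bigl((M_1(D), M_2(D)) \,\|\, (M_1(D'), M_2(D'))\bigr) = d_\alpha(M_1(D) \| M_1(D')) + d_\alpha(M_2(D) \| M_2(D')),
\end{align*}
writing $M_2(D)$ as if its first argument were absent; that identity holds verbatim only for a product (non-adaptive) pair of mechanisms, and the adaptive case --- where $M_2$ sees the realized output $Y_1$ --- silently defers to the zCDP composition machinery of \cite{bun_concentrated_2016}. Your chain-rule argument supplies exactly the missing content: factor $dP/dQ$ into marginal and conditional parts, bound the conditional divergence $d_\alpha(P_{2\mid 1 = y_1} \| Q_{2\mid 1 = y_1}) \leq \alpha P_2(r)$ uniformly in $y_1$ (valid because, with $y_1$ fixed, $M_2(y_1, \cdot)$ is a $P_2$-PRzCDP mechanism and the symmetric difference $D \ominus D' = \{r\}$ is unchanged), pull that factor out of the outer expectation, and bound what remains using the $M_1$ guarantee. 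Your final post-processing step, projecting the joint view onto its second coordinate to recover $M_3$, is also a detail the paper omits, since its displayed bound concerns the joint release rather than $M_3$ itself. The one obstacle you flag --- the disintegration and measurability needed to make the conditional factorization rigorous on a general range space --- is indeed where the formal weight lies, and invoking the conditional R\'enyi composition lemma of \cite{bun_concentrated_2016}, as you propose, is the standard resolution. In short: same high-level decomposition as the paper, but your route actually proves the adaptive statement, whereas the paper's displayed equality is only literally valid in the non-adaptive case.
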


Like in DP, the privacy losses sum when mechanisms are composed together. In PRzCDP, since the privacy loss is encoded into the policy function, this takes the form of the sum of the two policy functions. \par
PRzCDP also satisfies a form of parallel composition. When multiple mechanisms which satisfy $P$-PRzCDP are run on disjoint subsets of the database, the joint result also satisfies $P$-PRzCDP:

\begin{lemma}[Parallel composition for $P$-PRzCDP]
\label{lem:PRzCDP-par}
Define the partition of size $J \in \mathbb{N} \cup \{ \infty \}$: Let $\mathcal{T}$ define a partition over the universe of possible records. That is,
$$
\mathcal{T} \triangleq \bigcup_{j=1}^J C_j, \qquad C_i \cap C_j = \emptyset, i \neq j
$$
Let $\mathcal{D}_j$ be the space of all databases containing only records in $C_j$ for $j \in [J]$. Let $\{ M_j \}_{j=1}^J$ be mechanisms satisfying $P$-PRzCDP for databases $D_j \in \mathcal{D}_j$ for $j \in [J]$, respectively. Then:
$$
M(D) \triangleq \left\{ M_j(D \cap C_j) \mid j \in [J] \right\}
$$
satisfies $P$-PRzCDP. Note that the $M_j$s can depend on their respective $C_j$s, allowing for adaptivity.
\end{lemma}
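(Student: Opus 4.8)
The plan is to reduce the parallel composition statement to the single-class privacy guarantees by exploiting two facts: that the partition $\{C_j\}$ assigns every record to exactly one class, and that the R\'enyi divergence is additive across independent coordinates. First I would fix an arbitrary pair of neighboring databases $D, D'$ with $D \ominus D' = \{r\}$ and locate the unique index $j^\star \in [J]$ with $r \in C_{j^\star}$, which exists and is unique because the $C_j$ are disjoint and cover $\mathcal{T}$. Since $r$ lies in no class other than $C_{j^\star}$, the restrictions satisfy $D \cap C_j = D' \cap C_j$ for every $j \neq j^\star$, while $D \cap C_{j^\star}$ and $D' \cap C_{j^\star}$ are themselves neighboring databases in $\mathcal{D}_{j^\star}$ whose symmetric difference is exactly $\{r\}$.

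Next I would write the joint output as a product distribution. Because $M(D) = \{M_j(D \cap C_j) \mid j \in [J]\}$ runs each $M_j$ on a disjoint slice of the data using independent randomness, the law of $M(D)$ factorizes as $\bigotimes_{j} M_j(D \cap C_j)$, and likewise for $M(D')$. I would then invoke the additivity of R\'enyi divergence over product measures,
\[
d_\alpha\!\left( \textstyle\bigotimes_j \mu_j \,\Big\|\, \bigotimes_j \nu_j \right) = \sum_{j} d_\alpha(\mu_j \| \nu_j),
\]
to obtain $d_\alpha(M(D) \| M(D')) = \sum_{j \in [J]} d_\alpha\big(M_j(D \cap C_j) \,\big\|\, M_j(D' \cap C_j)\big)$.

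Finally I would evaluate each summand. For $j \neq j^\star$ the two inputs coincide, so the corresponding divergence is $0$; only the $j^\star$ term survives. Applying the hypothesis that $M_{j^\star}$ satisfies $P$-PRzCDP on $\mathcal{D}_{j^\star}$ to the neighboring pair $D \cap C_{j^\star}, D' \cap C_{j^\star}$ (which differ by $\{r\}$) bounds this term by $\alpha P(r)$, giving $d_\alpha(M(D)\|M(D')) \le \alpha P(r)$ for all $\alpha \in (1,\infty)$, which is exactly $P$-PRzCDP. The adaptivity clause poses no difficulty: each $M_j$ may be selected as a function of its class $C_j$, but $C_j$ is a fixed public object rather than confidential data, so the factorization and the per-class guarantees are unaffected.

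I expect the main obstacle to be the measure-theoretic bookkeeping when $J = \infty$: I must ensure the product-measure factorization and the additivity identity are valid for a countable family, and that the sum $\sum_j d_\alpha(\cdots)$ is well-defined. This is manageable because at most one summand is nonzero, so convergence is immediate, but I would state the independence assumption explicitly and confirm that the additivity lemma for R\'enyi divergence extends to countable products before relying on it.
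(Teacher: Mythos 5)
Your proposal is correct and follows essentially the same route as the paper's proof: both decompose $d_\alpha(M(D)\|M(D'))$ via additivity of R\'enyi divergence over the independent per-class outputs, observe that $D \cap C_j = D' \cap C_j$ for every class not containing the differing record (so those terms vanish), and bound the single surviving term by $\alpha P(r)$ using the class-level $P$-PRzCDP guarantee. Your treatment is somewhat more careful than the paper's---making the independence assumption, the uniqueness of $j^\star$, and the countable-$J$ convergence explicit---but these are refinements of the same argument rather than a different one.
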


Note that Lemma $\ref{lem:PRzCDP-par}$ does not require the individual mechanisms $\{ M_j \}^J_{j=1}$ to be the same for all $j \in [J]$, allowing for adaptive parallel composition.

PRzCDP satisfies the group privacy notion as well. A mechanism that satisfies $P$-PRzCDP also protects groups. 

\begin{lemma}[Simple group privacy for $P$-PRzCDP]
\label{lem:PRzCDP-group}Consider a sequence of databases $D_0, \dots, D_{J}$ where $D_0 = D$ and $D_{j} \ominus D_{j-1} = \{ r_j \}$ for $j \in [J]$. Let $M$ be a randomized mechanism satisfying $P$-PRzCDP. Then we have:
\begin{equation}
d_\alpha(M(D_0)||M(D_J)) \leq \alpha J \sum_{j=1}^J P(r_j)
\end{equation}
\end{lemma}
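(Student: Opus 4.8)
The plan is to reduce the chain bound to the single-step guarantees provided by $P$-PRzCDP and then combine them with a triangle-type inequality for R\'enyi divergence. First I would observe that for each consecutive pair $(D_{j-1}, D_j)$ the symmetric difference is exactly $\{r_j\}$, so Definition~\ref{def:PRzCDP} yields, for every $\alpha \in (1,\infty)$, the per-step bound $d_\alpha(M(D_{j-1}) \| M(D_j)) \le \alpha P(r_j)$ (and, since $\ominus$ is symmetric, the same bound in the reverse direction). The whole difficulty is then to turn these $J$ adjacent bounds, which hold simultaneously at all orders, into a single bound on $d_\alpha(M(D_0)\|M(D_J))$.

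The key tool is the weak triangle inequality for R\'enyi divergence underlying the zCDP group-privacy argument of \cite{bun_concentrated_2016}: if $P,Q,R$ satisfy $d_\beta(P\|Q)\le c_1\beta$ and $d_\beta(Q\|R)\le c_2\beta$ for all $\beta>1$, then $d_\alpha(P\|R)\le (\sqrt{c_1}+\sqrt{c_2})^2\alpha$ for all $\alpha>1$. I would recall its proof sketch: using the identity $\exp\!\big((\alpha-1)d_\alpha(P\|R)\big)=\mathbb{E}_{x\sim Q}\big[(P/Q)^{\alpha}(Q/R)^{\alpha-1}\big]$, i.e.\ re-centering the expectation at the intermediate distribution $Q$ so that each likelihood ratio is integrated against the measure it is compared to, and then applying H\"older with a free pair of conjugate exponents $(p,p')$. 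Re-centering at $Q$ is precisely what makes the two resulting factors genuine R\'enyi divergences (one of $P$ from $Q$ at order $\alpha p$, one of $Q$ from $R$ at order $(\alpha-1)p'+1$), after which the zCDP-type bounds apply and the exponents are optimized to give the $(\sqrt{c_1}+\sqrt{c_2})^2$ constant.

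With this two-point inequality in hand I would induct along the chain: treating the sub-chain $D_0,\dots,D_{J-1}$ as a single block that, by the inductive hypothesis, satisfies a zCDP-type bound at every order with constant $\big(\sum_{j<J}\sqrt{P(r_j)}\big)^2$, and combining it with the last step's constant $P(r_J)$ via the triangle inequality, using $\sqrt{\big(\sum_{j<J}\sqrt{P(r_j)}\big)^2}+\sqrt{P(r_J)}=\sum_{j\le J}\sqrt{P(r_j)}$. This yields $d_\alpha(M(D_0)\|M(D_J)) \le \alpha\big(\sum_{j=1}^J \sqrt{P(r_j)}\big)^2$. Finally I would apply Cauchy--Schwarz, $\big(\sum_{j=1}^J \sqrt{P(r_j)}\big)^2 \le J\sum_{j=1}^J P(r_j)$, to recover the stated (looser but simpler) bound $\alpha J \sum_{j=1}^J P(r_j)$; note this reproduces the familiar $k^2\rho$ zCDP group-privacy constant when $P\equiv\rho$ and $J=k$.

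The main obstacle is the triangle-inequality step. R\'enyi divergence admits no exact triangle inequality, and a naive telescoping $P_0/P_J=\prod_j P_{j-1}/P_j$ followed by a single H\"older step fails, because each factor $P_{j-1}/P_j$ would then be integrated against the wrong measure $P_0$. The fix---re-centering each two-point combination at the intermediate distribution and carefully tracking the induced shifts in the R\'enyi order, which is legitimate precisely because the $P$-PRzCDP guarantee holds for \emph{all} $\alpha$---is the delicate part; the remaining Cauchy--Schwarz relaxation is routine.
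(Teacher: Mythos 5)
Your proof is correct, and it runs on the same engine as the paper's: the weak triangle inequality for R\'enyi divergence (Lemma 5.2 of \cite{bun_concentrated_2016}), applied inductively by peeling one link off the chain $D_0, \ldots, D_J$. The difference is in how the free parameter $k$ is chosen and in the invariant the induction carries. The paper fixes $k = J$ at every step and carries the looser hypothesis $d_\alpha(M(D_0)\|M(D_m)) \le \alpha\, m \sum_{j \le m} P(r_j)$, so the stated bound drops out of the arithmetic immediately. You instead optimize $k$ at each step --- your packaged two-point lemma with constant $(\sqrt{c_1}+\sqrt{c_2})^2$ is exactly the optimization the paper itself performs in its $J=2$ illustration of Lemma \ref{lem:advanced-PRzCDP-par}, where $k^* = \sqrt{c_1/c_2}+1$ --- carry the tighter invariant $d_\beta(M(D_0)\|M(D_m)) \le \beta \bigl(\sum_{j\le m}\sqrt{P(r_j)}\bigr)^2$ at all orders $\beta \in (1,\infty)$, and relax to the claimed bound only at the end via Cauchy--Schwarz. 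Your route buys a genuinely stronger conclusion: $\alpha\bigl(\sum_{j=1}^J \sqrt{P(r_j)}\bigr)^2$ is strictly smaller than $\alpha J \sum_{j=1}^J P(r_j)$ unless all $P(r_j)$ coincide (in which case both recover the familiar $J^2\rho$ of zCDP group privacy), it is order-agnostic, and because you re-optimize $k$ per step rather than committing to a single $k$ for the whole chain, it can even beat the advanced bound of Lemma \ref{lem:advanced-PRzCDP-par} (for constant policies that bound exceeds $J^2\rho$, while yours equals it). What the paper's route buys is brevity: no per-step optimization and no closing Cauchy--Schwarz. Two small items to button up in a full write-up, both of which you already flag: the two-point lemma needs its one-line derivation from the cited inequality (minimize $\frac{k}{k-1}c_1 + k c_2$ over $k \in (1,\infty)$), and the induction hypothesis must be maintained at all orders simultaneously, since the triangle inequality consumes divergences at orders other than the target $\alpha$; these are presentational, not gaps.
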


\begin{lemma}[Advanced group privacy for $P$-PRzCDP]
\label{lem:advanced-PRzCDP-par}
Consider a sequence of databases $D_0, \dots, D_{J}$ where $D_0 = D$ and $D_{j} \ominus D_{j-1} = \{ r_j \}$ for $j \in [J]$. Let $M$ be a randomized mechanism satisfying $P$-PRzCDP. Define $r_{(1)}, \dots, r_{(J)}$ such that:
$$
P(r_{(1)}) \geq P(r_{(2)}) \geq \cdots \geq P(r_{(J)})
$$
Then we have:
\begin{equation}
d_\alpha(M(D_0)||M(D_J)) \leq \alpha \inf_{k \in (1, \infty)} \sum_{j=1}^J \frac{k^j}{k - 1} P(r_{(j)})
\end{equation}
\end{lemma}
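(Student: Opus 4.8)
The plan is to reduce this to a chained application of a weak (quasi-)triangle inequality for the R\'enyi divergence, crucially exploiting that $P$-PRzCDP controls each adjacent pair of databases at \emph{every} order. Writing $P_j \triangleq M(D_j)$, the hypothesis $D_j \ominus D_{j-1} = \{r_j\}$ together with Definition~\ref{def:PRzCDP} gives $d_\beta(P_{j-1} \| P_j) \le \beta\, P(r_j)$ for all $\beta \in (1,\infty)$ and every $j \in [J]$. The statement is thus an instance of the following purely information-theoretic question: given a chain $P_0, \dots, P_J$ whose consecutive elements satisfy $d_\beta(P_{j-1}\|P_j) \le \beta \rho_j$ for all $\beta$, how large can $d_\alpha(P_0 \| P_J)$ be? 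The fact that the adjacent bounds hold simultaneously for all orders (not just one fixed $\alpha$) is what makes the argument go through, since the triangle inequality inflates orders geometrically.

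First I would invoke a weak triangle inequality of the form $d_\alpha(P\|R) \le \tfrac{k}{k-1}\, d_{\beta_1}(P\|Q) + k\, d_{\beta_2}(Q\|R)$ (for a free parameter $k>1$ and suitably inflated orders $\beta_1,\beta_2$ depending on $\alpha$ and $k$), and apply it repeatedly in a caterpillar fashion to peel off one record at a time. At each peel the recursive term picks up a factor $k$ while the detached adjacent term carries coefficient $\tfrac{k}{k-1}$; after telescoping, the record detached at the $m$-th step contributes $\tfrac{k^m}{k-1}$ times its policy value. Because each detached adjacent bound holds at its inflated order $\beta_2$, substituting $d_{\beta_2}(P_{j-1}\|P_j) \le \beta_2 \rho_j$ and tracking how the inflated orders recombine collapses everything to a bound linear in $\alpha$, namely $d_\alpha(P_0\|P_J) \le \alpha \sum_{m=1}^J \tfrac{k^m}{k-1}\rho_{\pi(m)}$, where $\pi$ records the order in which records were peeled.

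Since the peeling order is \emph{ours to choose} --- the single-record edits connecting $D_0$ to $D_J$ may be applied in any order, so we may route through whichever intermediate databases we like --- the displayed bound holds for every permutation $\pi$. I would therefore select the $\pi$ minimizing the right-hand side: as the coefficients $\tfrac{k^m}{k-1}$ increase in $m$ while we are free to match them to the values $P(r_j)$, the rearrangement inequality dictates pairing the largest policy value with the smallest coefficient, i.e.\ peeling in order of decreasing $P$. This assigns $P(r_{(m)})$ the coefficient $\tfrac{k^m}{k-1}$ and yields $d_\alpha(P_0\|P_J) \le \alpha \sum_{m=1}^J \tfrac{k^m}{k-1} P(r_{(m)})$; taking the infimum over $k \in (1,\infty)$ gives the claim. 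The main obstacle is the exact constant bookkeeping in the weak triangle inequality: one must choose $\beta_1,\beta_2$ so that (a) they remain admissible (finite and $>1$) even though they grow like $k^m$ across the telescope --- exactly where the all-orders guarantee is indispensable --- and (b) the coefficients collapse to precisely $\tfrac{k^m}{k-1}$ rather than a looser geometric series. A secondary point to verify is that reordering the connecting edits always yields legitimate intermediate databases, so that the adjacent PRzCDP bounds genuinely apply along the reordered path.
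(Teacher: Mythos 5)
Your proposal follows essentially the same route as the paper's proof: it chains the weak triangle inequality for R\'enyi divergence (Lemma 5.2 of \cite{bun_concentrated_2016}) along the path $M(D_0), \dots, M(D_J)$ with orders inflating geometrically as $k^j\alpha$, substitutes the all-orders PRzCDP bounds on adjacent pairs so the coefficients collapse to $\frac{k^j}{k-1}$, pairs the largest policy losses with the smallest coefficients by choosing the peeling order, and finally takes the infimum over $k \in (1,\infty)$. The two points you flag for verification --- the constant bookkeeping at inflated orders and the legitimacy of reordering the single-record edits --- are precisely the steps the paper's proof carries out (the latter only implicitly, via its remark that the bound holds for any permutation of the records).
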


\subsection{Relation to Other Privacy Formulations}
\label{sec:priv_relation}

Prior work has studied the idea of giving different privacy guarantees to different records, and the idea of accounting privacy loss as a function of the data. The flexibility of PRzCDP, by comparison, lies in the fact that each unit's privacy loss is a \emph{function} of their private record, and only that function is published instead of particular values. Units with knowledge of their private record use this public function to `look up' their privacy loss. 

PRzCDP most closely resembles the definition for Personalized Differential Privacy (PRDP) \cite{jorgensen_2015_conservative}. Like PRzCDP, PRDP gives different guarantees to different records: 

\begin{definition}[Personalized zCDP (PRDP)\cite{jorgensen_2015_conservative}] 
    Let $\mathcal{T}$ be a universe of participating records and $\Phi: \mathcal{T} \mapsto \mathbb{R}^+$ be a function which maps each unit to a privacy loss. A randomized algorithm $M$ satisfies $\Phi$-PRDP if, for any two databases $D, D'$ which differ on the contributions of one unit $r \in \mathcal{T}$, we have

        $$d_\alpha(M(D) || M(D')) \leq \alpha \Phi(r)$$

    \label{def:PRDP}
\end{definition}
This definition contrasts with PRzCDP by assuming that $\Phi(r)$ is public knowledge for all $r \in \mathcal{T}$ \emph{i.e.} each unit's privacy guarantee is publicly publishable. Fundamentally, this requires that the guarantee, $\Phi(r)$, of each unit, $r$, is independent of their private record. By comparison, $P$-PRzCDP does not make any assumptions about independence of a unit's privacy guarantee and their sensitive record. Thus, each unit's guarantee, $P(r)$ for the unit's record $r$, remains confidential. Only the policy function, $P(\cdot)$, is published and individual unit with knowledge of their private record can compute their own guarantee.

Other works have also studied computing privacy loss as a function of the data but only in an effort to give tighter accounting of the global privacy loss, which is constant across all participants and is made public. For instance, \citet{pate} give tight privacy loss accounting by deriving a global loss from the data itself. This global loss is then passed through a novel mechanism to publish a noisy private version. PRzCDP instead gives per-record guarantees that are not published. Similarly, the individualized accounting method of \cite{feldman_individual_2021} computes each record's loss as a function of its data. However, this is only to ensure that no single record exceeds the public global privacy budget, constant across all records. Alternatively, \cite{redberg2021privately} uses an existing global DP guarantee to provide an ex-post characterization for the gap between the global privacy loss and the confidential data-dependent realized privacy loss. The foundational distinguishing factor of PRzCDP from both these approaches is that only the policy is published, as opposed to any global or individual privacy losses. 

DP encompasses a wide variety of formalisms \cite{desfontaines_sok_2020} which rely on alternative characterizations of scenarios under comparison, measures of privacy loss between those scenarios, and the generality of bounds on these measures. Here, we show how PRzCDP is interoperable with these definitions. Connections can be made to one-sided DP \cite{doudalis_one-sided_2017}, also adapted to the semantics of zCDP:                          

\begin{definition}[One-sided zCDP (OSzCDP)\cite{doudalis_one-sided_2017}]
    Let $P: \mathcal{R} \mapsto \{ 0, 1 \}$ be a function that labels records as privacy-sensitive ($P(r) = 0$) or not ($P(r) = 1$). A randomized algorithm $M$ satisfied $(P, \rho)$-one-sided zero-concentrated DP if, for any two databases $D, D'$ where
    \begin{equation}
        D' = D \setminus \{ r \} \cup \{ r' \}, P(r) = 0, r \neq r'    
    \end{equation}
    we have
    \begin{equation}
        d_\alpha(M(D) || M(D')) \leq \alpha \rho
    \end{equation}
\end{definition}

\begin{lemma}
    Suppose there exists $\rho \geq 0$ and a subset of records $R \subseteq \mathcal{T}$ such that:
    \begin{equation}
        \sup_{r \in R} P(r) \leq \rho
    \end{equation}
    Then for the policy $P^*: \mathcal{T} \mapsto \{ 0, 1 \}$ where $P(r) = \mathbbm{1}\{ r \notin R \}$, any mechanism $M$ that is $P$-PRzCDP is also $(P^*, 2\rho)$-OSzCDP.
\end{lemma}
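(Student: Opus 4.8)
The plan is to reduce the OSzCDP substitution to a pair of single-record add/remove relations that $P$-PRzCDP already controls, and then stitch the two bounds back together with a triangle inequality for Rényi divergence. Fix any pair $D, D'$ witnessing the OSzCDP condition, so $D' = D \setminus \{ r \} \cup \{ r' \}$ with $P^*(r) = 0$ and $r \neq r'$. Since $P^*(r) = \mathbbm{1}\{ r \notin R \} = 0$, the removed record satisfies $r \in R$, and the hypothesis $\sup_{s \in R} P(s) \leq \rho$ gives $P(r) \leq \rho$. The goal is then to bound $d_\alpha(M(D) \| M(D')) \leq 2\rho\alpha$ for every $\alpha \in (1, \infty)$.

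First I would introduce the intermediate database $D'' \triangleq D \setminus \{ r \} = D' \setminus \{ r' \}$, which differs from $D$ only by the removal of $r$ and from $D'$ only by the removal of $r'$. Both pairs are single-record symmetric differences, $D \ominus D'' = \{ r \}$ and $D'' \ominus D' = \{ r' \}$, so Definition~\ref{def:PRzCDP} applies to each leg: $d_\alpha(M(D) \| M(D'')) \leq \alpha P(r) \leq \alpha \rho$ and $d_\alpha(M(D'') \| M(D')) \leq \alpha P(r')$. This is the step where the per-record policy does the real work, since the removed sensitive record is charged only $P(r) \le \rho$ precisely because it lies in $R$.

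Next I would combine the two legs $M(D) \to M(D'') \to M(D')$. Because Rényi divergence does not obey an exact additive triangle inequality, this recombination is the technical heart of the argument, and I would reuse the same shifted-order machinery that underlies the group-privacy statement of Lemma~\ref{lem:PRzCDP-group}: writing the swap privacy loss as the sum of the two single-step losses and applying Hölder's inequality with conjugate exponents bounds $d_\alpha(M(D) \| M(D'))$ in terms of $d_{\beta_1}(M(D) \| M(D''))$ and $d_{\beta_2}(M(D'') \| M(D'))$ at inflated orders $\beta_1, \beta_2 > \alpha$. Since each single-step bound holds at \emph{every} order and is at most $(\cdot)\rho$, the order inflation is absorbed and I recover a bound linear in $\alpha$ with coefficient $2\rho$; in the pure $\epsilon$-DP reading this is just the exact additive triangle inequality for max-divergence, which makes the constant transparent.

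The step I expect to be the main obstacle is exactly this recombination. A black-box application of group privacy (two add/remove steps, a group of size two) would yield the looser quadratic constant, so landing on the advertised $2\rho$ requires tracking the order inflation carefully and exploiting that both single-step divergences are controlled uniformly over $\alpha$. A second delicate point is ensuring the replacement record $r'$ is charged at most $\rho$ in the second leg, i.e.\ that $P(r') \le \rho$ as well; this is where the argument must restrict the admissible replacements to the same bounded-policy region $R$, since an unconstrained $r'$ with arbitrarily large $P(r')$ could otherwise make $M(D)$ and $M(D')$ far apart. Once both legs are bounded uniformly in $\alpha$ by $\rho\alpha$, assembling them gives $d_\alpha(M(D) \| M(D')) \leq 2\rho\alpha$ and hence $(P^*, 2\rho)$-OSzCDP.
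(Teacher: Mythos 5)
Your decomposition through the intermediate database $D'' = D \setminus \{r\}$ is the natural one, and your bounds on the two legs are exactly what Definition~\ref{def:PRzCDP} provides. The fatal step is the recombination. For R\'enyi divergence the only available ``triangle inequality'' is the shifted-order bound of Lemma 5.2 in \cite{bun_concentrated_2016} (the same inequality the paper uses to prove Lemmas~\ref{lem:PRzCDP-group} and~\ref{lem:advanced-PRzCDP-par}), and when both legs satisfy $d_\beta(\cdot \| \cdot) \leq \rho\beta$ at \emph{every} order $\beta$, optimizing that bound over $k$ yields
\[
d_\alpha\bigl(M(D) \,\|\, M(D')\bigr) \;\leq\; \alpha \bigl(\sqrt{\rho} + \sqrt{\rho}\bigr)^2 \;=\; 4\rho\alpha,
\]
which is precisely the paper's own $J=2$ computation at the end of the proof of Lemma~\ref{lem:advanced-PRzCDP-par}. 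No amount of careful tracking of the order inflation can do better, because the factor $4$ is tight: take $M(D) = \mathcal{N}\bigl(\sum_{r \in D} v(r), \sigma^2\bigr)$ with policy $P(r) = v(r)^2 / (2\sigma^2)$, which satisfies $P$-PRzCDP with equality; let $R = \{ r : |v(r)| \leq c \}$ so that $\rho = c^2/(2\sigma^2)$, and substitute a record with $v = -c$ by one with $v = +c$. The mean shifts by $2c$, giving $d_\alpha = 4\rho\alpha > 2\rho\alpha$. Exact additivity of privacy loss along a path of databases is a property of the max-divergence (pure $\epsilon$-DP); H\"older-type arguments do not transport it to the R\'enyi divergences underlying zCDP, so the coefficient $2\rho$ is not reachable by your route --- or any other.

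Your second worry is equally real and cannot be patched from inside the proof: the OSzCDP definition places no restriction whatsoever on the replacement record $r'$, so the second leg is bounded only by $\alpha P(r')$, which is unbounded over $r' \in \mathcal{T}$ (in the Gaussian example above, insert a record of arbitrarily large value). Restricting admissible replacements to $R$, as you propose, proves a different statement than the one claimed. You should know that the paper supplies no proof of this lemma at all, and both obstacles you hit are defects of the statement rather than of your strategy: the statement the paper's machinery actually supports is that if both the removed record $r$ and the inserted record $r'$ lie in $R$, then the OSzCDP inequality holds with constant $4\rho$; the advertised constant $2\rho$ would be correct only under pure $\epsilon$-DP semantics, where the additive triangle inequality you allude to is exact.
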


\section{Mechanisms for PRzCDP}
\label{sec:unit splitting}

In this section, we present a novel class of privacy mechanisms for ensuring PRzCDP. This class of mechanisms is called \textit{Unit Splitting}. As the name suggests, the mechanisms follow this general pattern: 
\begin{itemize}
\item Preprocess the input dataframe by ``splitting" each row or record into many smaller rows or sub-records. The number of splits depends on a measure of how large the row is. 
\item Next, we run a mechanism that satisfies standard 
$\rho$-zCDP. 
\item By group composition, the privacy loss of a row or record in the original dataframe will be $k^2\rho$, where $k$ is the number of times an original row is split in the preprocessing step.  
\end{itemize}

This allows a PRzCDP mechanism to be built by doing a pre-processing step followed by any arbitrary zCDP mechanism. We introduce the basics of unit splitting in Section~\ref{sec:unit-splitting-intro}, and describe how to use them to answer SQL aggregation queries and group-by aggregation queries in Sections~\ref{sec:unit-splitting-aggregations} and \ref{sec:unit-splitting-groupby}, respectively. 

\subsection{Unit Splitting}\label{sec:unit-splitting-intro}
Unit splitting is a preprocessing step that uses a mapping function $A(r)$ to map each record into one or more other records. Answering queries on the split records using a mechanism that satisfies zCDP results in an overall mechanism that satisfies PRzCDP. We state this more formally as follows.
\begin{lemma} [$\rho$-zCDP with pre-processing implies $P$-PRzCDP] \label{lem:pre-processing}
    Consider a pre-processing function $A: \mathcal{T} \mapsto \mathcal{T}^*$  where $A$ maps each record $r \in \mathcal{T}$ to a multiset of records in $\mathcal{T}^*$. Let $|A(r)|$ be the cardinality of the multiset $A(r)$, i.e., the number of subsequent records generated by $A(r)$. If $M$ is a $\rho$-zCDP algorithm operating on $\mathcal{D}^*$, then $M(A(D))$ satisfies $P$-PRzCDP where $P(r) = \left( \rho |A(r)|^2 \right)$.
\end{lemma}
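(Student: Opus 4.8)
The plan is to reduce the claim directly to the group-privacy property of $\rho$-zCDP (Theorem~\ref{thrm:zCDP-group}). The entire content of the lemma is that a single-record change in the original data becomes a $k$-record change in the preprocessed data, where $k = |A(r)|$, and group privacy already tells us how the R\'enyi bound scales with $k$.

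First I would fix the meaning of $A$ on databases: extend the record-level map to a database map by multiset union, $A(D) = \biguplus_{s \in D} A(s)$, so that preprocessing each record and concatenating the results gives the input fed to $M$. With this convention I would take any pair $D, D'$ with $D \ominus D' = \{ r \}$, i.e. neighbors differing by the single record $r$, and show that their preprocessed versions differ by exactly the multiset $A(r)$. Concretely, if $D' = D \uplus \{ r \}$ then $A(D') = A(D) \uplus A(r)$, and symmetrically for removal; in either case $A(D)$ and $A(D')$ differ in exactly $|A(r)|$ rows.

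Next I would invoke the zCDP group-privacy theorem on the preprocessed databases. Because $A(D)$ and $A(D')$ can be connected by a chain of $k = |A(r)|$ single-row additions or removals, they form a group of size $k$, so Theorem~\ref{thrm:zCDP-group} gives $d_\alpha(M(A(D)) \| M(A(D'))) \le (k^2 \rho)\, \alpha$ for all $\alpha \in (1, \infty)$. Substituting $k = |A(r)|$ yields $d_\alpha(M(A(D)) \| M(A(D'))) \le \alpha\, \rho |A(r)|^2 = \alpha P(r)$, which is precisely the $P$-PRzCDP condition of Definition~\ref{def:PRzCDP}.

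The point requiring care, rather than a genuine obstacle, is the bookkeeping in the reduction step: I must confirm that the symmetric difference $D \ominus D' = \{ r \}$ in the original schema $\mathcal{T}$ really does translate into a group difference of cardinality exactly $|A(r)|$ in $\mathcal{T}^*$, independent of whether the sub-records produced by $A(r)$ happen to coincide with records already present in $A(D)$ or with one another. Working with multisets resolves this, since adjoining the multiset $A(r)$ always contributes $|A(r)|$ elements regardless of collisions, and the group-privacy chain of single-row edits is well defined in either direction. Once that correspondence is pinned down, the result follows immediately from group privacy with no further estimation.
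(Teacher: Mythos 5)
Your proposal is correct and follows the same route as the paper's proof: both reduce the claim to the group-privacy property of $\rho$-zCDP by observing that a single-record change $D \ominus D' = \{r\}$ in the original data becomes a change of exactly $|A(r)|$ rows in the preprocessed data, giving $d_\alpha(M(A(D)) \| M(A(D'))) \leq \alpha \rho |A(r)|^2$. Your treatment is more explicit about the multiset bookkeeping (extending $A$ to databases and handling collisions), which the paper's terse proof leaves implicit, but the underlying argument is identical.
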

\begin{proof}
Let $D$ and $D'$ be neighboring datasets and, without loss of generality, let $A(D') \setminus A(D) = \{ s_1, \dots s_{|A(r)|} \} \subseteq \mathcal{T}^*$. By the group-privacy properties of $\rho$-zCDP:
\begin{align}
    D_\alpha(M(A(D)) || M(A(D'))) \leq \alpha \left( |A(r)|^2 \rho \right)
\end{align}
\end{proof}

This allows a practitioner to create mechanisms which satisfy $P$-PRzCDP by using the unit-splitting preprocessing step followed by an off-the-shelf zCDP mechanism. This applies to all zCDP mechanisms from existing private frameworks such as Tumult Analytics\cite{berghel2022tumult}, to complex mechanisms such as stochastic gradient descent\cite{AbadiCGMMT016} and the matrix mechanism\cite{li_matrix_2015}. The policy function in this case is implied by the choice of unit splitting. The number of splits per record directly impacts the privacy loss of that record, with those that require a higher number of splits receiving a larger privacy loss than those with a smaller number of splits. \par 
Since each record's privacy loss is now a function of the contents of the record, the privacy loss is considered a private value, and cannot be published. Instead, for transparency, the splitting function itself can be released, which would allow an observer to reason about the privacy loss of hypothetical records without releasing information about the records. 

Unit splitting is a powerful tool that can be applied to a wide variety of tasks. In the following sections, we'll demonstrate how to use the unit splitting paradigm for one such task. Specifically, PRzCDP algorithms for answering SQL group-by aggregation queries for highly skewed data.

\begin{table}[t]
    \begin{subtable}{\linewidth}
      \centering
        \begin{tabular}{|c|c|c|c|}
        
            \hline
            ID & Industry & Employees & Payroll   \\ 
            \hline
            1 & Agriculture & 150 & \$ 10,000,000  \\
            \hline
            2 & Agriculture & 50 & \$ 15,000,000  \\
            \hline
            3 & Mining & 100 & \$ 10,000,000  \\
            \hline
            4 & Mining & 50 & \$ 10,000,000  \\
            \hline
            5 & Retail & 20 & \$ 1,000,000  \\
            \hline
        \end{tabular}
        \label{tab:pre_split}
        \caption{Pre-Split Table}
    \end{subtable}%
    \\
    \begin{subtable}{\linewidth}
      \centering
        \begin{tabular}{|c|c|c|c|}
        
            \hline
            ID & Industry & Employees & Payroll  \\
            \hline
            1 & Agriculture & 50 & \$ 5,000,000 \\
            \hline
            1 & Agriculture & 50 & \$ 5,000,000  \\
            \hline
            1 & Agriculture & 50 & \$ 0 \\
            \hline
            2 & Agriculture & 50 & \$ 5,000,000  \\
            \hline
            2 & Agriculture & 0 & \$ 5,000,000  \\
            \hline
            2 & Agriculture & 0 & \$ 5,000,000  \\
            \hline
            3 & Mining & 50 & \$ 5,000,000  \\
            \hline
            3 & Mining & 50 & \$ 5,000,000  \\
            \hline
            4 & Mining & 50 & \$ 5,000,000  \\
            \hline
            4 & Mining & 0 & \$ 5,000,000  \\
            \hline
            5 & Retail & 20 & \$ 1,000,000  \\
            \hline
            
        \end{tabular}
        \label{tab:post_split}
        \caption{Post-Split Table}
    \end{subtable} 
 \caption{Sample table before and after unit splitting. The splitting thresholds used were as follows. Employees: 50, Payroll: \$ 5,000,000 \js{These tables take up a lot of real estate, are they worth compressing?}}
 \label{tab:ex_splitting}
\end{table}

\subsection{Answering Aggregation Queries Using Unit Splitting}\label{sec:unit-splitting-aggregations}

We can privately compute aggregates such as sums on skewed data by applying unit splitting in the form of a splitting threshold. Doing so splits the few large contributors with possibly unbounded values into several smaller bounded values. This method both bounds and reduces the overall sensitivity of many queries, therefore allowing lower-error private answers. It, however, comes at the cost of higher privacy loss for larger records which were split into multiple smaller records. The choice of splitting procedure results in an implicit policy function for PRzCDP. \par 
We make a distinction between conditional attributes and measure attributes. Conditional attributes are those which will be used in conditional statements, such as the WHERE clause in an SQL query. These get duplicated across all splits of the data. Measure attributes are those which are computed in aggregations and thus get split across all the split records.
For numerical measure attributes, we individually evaluate the minimum number of times each of the magnitude attributes $a$ need to be split (i.e., the smallest integer $m$ such that $m T(a) \geq r(a)$). For example, if a row has $r(a)=12$, and the splitting threshold is $T(a) = 5$, then the row would need to be split into at least 3 sub-records, corresponding to  values of $\{ 5, 5, 2\}$. This minimum number of splits may be different for each attribute in any particular row; to resolve this difference, we select the magnitude attribute with the \emph{largest} number of required splits. The remaining split elements are zero-padded. Ideally, each magnitude attribute will be split the same number of times, to prevent too many zero-padded elements. As a complete example, Table \ref{tab:ex_splitting} lists how several records would be split according to these rules. 

\begin{algorithm}[!htbp]
\caption{\label{alg:unit-splitting}Unit Splitting Pre-Processing} 
\begin{algorithmic}[1]
\Require $D$: Private dataframe.
\Require $T: \mathcal{A} \mapsto \mathbb{Z}$: splitting threshold function.
\Ensure $\{ r_i \}$: multiset of unit splitting rows.

\Procedure{UnitSplit}{$D$, $T$}

\For{$r \in \Rows(D)$}
    \State Find the smallest integer $m$ such that $mT(a) \geq r(a)$ for all $a \in \mathcal{A}$
    \State Split $r$ into $m$ rows such that for each split and each attribute $r_i(a) \leq T(a)$ and $\sum_i r_i(a) = r(a)$
    
\EndFor
\EndProcedure
\end{algorithmic}
\end{algorithm}

Once each record has been split, the split records are used to answer each query in the workload.  For each query on the original unsplit records, we define an equivalent rewritten query on the split records. Table \ref{tab:rewritten_queries} describes how each query is rewritten. For example, in order to ensure the non-private results agree between the split and unsplit records, we count the number of distinct IDs for the COUNTs when using the split records.
\begin{lemma}\label{lem:no-bias}
    As $\rho $ tends to $ \infty$, the difference between rewritten queries of Table~\ref{tab:rewritten_queries} and their unsplit counterparts tends to $0$. 
\end{lemma}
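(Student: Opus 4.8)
The plan is to decompose the claimed convergence into a deterministic exactness statement and a stochastic noise-vanishing statement, and then combine them. Write $q$ for an original query on the unsplit database $D$ and $\tilde q$ for its rewrite on the split database $A(D)$ as prescribed by Table~\ref{tab:rewritten_queries}, and let $M$ denote the $\rho$-zCDP mechanism (e.g.\ the Gaussian mechanism of Definition~\ref{def:gaussian_mech}) used to answer $\tilde q$. The quantity whose limit we must control is $M(A(D)) - q(D)$, where $q(D)$ is the exact non-private answer on the original data, and the strategy is to show $M(A(D)) = \tilde q(A(D)) + Z_\rho$ with $\tilde q(A(D)) = q(D)$ exactly and $Z_\rho \to 0$.

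First I would establish the deterministic identity $\tilde q(A(D)) = q(D)$ for every query type in the rewrite table by a short case analysis. For a SUM over a measure attribute $a$, Algorithm~\ref{alg:unit-splitting} guarantees $\sum_i r_i(a) = r(a)$ for each split record (the zero-padded components contribute nothing), so summing over all records of $D$ and over their splits yields identical totals; because conditional attributes are duplicated across splits, membership in a WHERE predicate is preserved, and the filtered sums agree as well. For a COUNT the rewrite counts distinct IDs, and since every split of a record inherits that record's ID, the number of distinct IDs over $A(D)$ equals the number of original records of $D$ satisfying the predicate; derived aggregates such as AVG follow by combining the SUM and COUNT identities. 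This yields $\tilde q(A(D)) = q(D)$ for all $D$, independent of $\rho$.

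Next I would argue that the stochastic error vanishes. For fixed splitting thresholds $T$, the split database $A(D)$, and hence the sensitivity $\Delta$ of $\tilde q$ on $A(D)$, do not depend on $\rho$. Writing $M(A(D)) = \tilde q(A(D)) + Z_\rho$, the Gaussian mechanism gives $Z_\rho \sim \mathcal{N}(0, \Delta^2/(2\rho))$, so $\E[Z_\rho^2] = \Delta^2/(2\rho) \to 0$ as $\rho \to \infty$; thus $Z_\rho \to 0$ in $L^2$ and in probability. Combining with the exactness identity gives $M(A(D)) - q(D) = Z_\rho \to 0$, which is the claim.

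The main obstacle I anticipate is the exactness step rather than the noise step: one must verify that each entry of the rewrite table is constructed so that the non-private answer is genuinely invariant under splitting, carefully handling zero-padding, the distinct-ID reformulation of COUNT, and (for group-by workloads) any key-selection component whose inclusion threshold becomes negligible relative to the vanishing noise as $\rho \to \infty$, so that no group is spuriously dropped in the limit. Being precise about the mode of convergence (here $L^2$ and in probability, driven entirely by the $1/\rho$ decay of the noise variance) is the only other point requiring care.
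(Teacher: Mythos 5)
Your proposal is correct and takes essentially the same route as the paper: the heart of both arguments is the deterministic case analysis showing that SUM is exactly preserved under splitting, COUNT is recovered via distinct row IDs, and AVG follows as the ratio of the two. The paper's proof consists only of this exactness claim, leaving the vanishing of the privacy noise as $\rho \to \infty$ implicit, whereas you spell that step out (noise variance $\Delta^2/(2\rho) \to 0$ for fixed thresholds); this is a completion of the same argument rather than a different approach.
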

\begin{proof}
    By construction, non-private sum queries will be equivalent using either the split or unsplit data. Since the values of the aggregation columns are distributed without loss or additions to each of the split rows, the sum of the split rows is equivalent to the sum of the unsplit rows. Likewise, since the row IDs are copied over to all the split rows, the COUNT\_DISTINCT query will only increment by one for each unsplit row. Since an average of the unsplit rows is the sum divided by count, the equivalent on the split rows is the sum divided by the distinct count.
\end{proof} 
Lemma~\ref{lem:no-bias} simply states that the process of unit splitting itself incurs no bias for summation queries and asymptotically negligible bias for other inexact reconstructed queries.
Since these aggregations are simply a zCDP mechanism after a mapping, by Lemma~\ref{lem:pre-processing}, the entire process satisfies $P$-PRzCDP where the policy function is dependent on the number of times each record is split. 
\begin{lemma}
    Computing Algorithm~\ref{alg:unit-splitting} followed by an aggregation from Table~\ref{tab:rewritten_queries} satisfies $P$-PRzCDP.
\end{lemma}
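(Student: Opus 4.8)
The plan is to reduce this statement directly to Lemma~\ref{lem:pre-processing}, which already establishes that any $\rho$-zCDP mechanism composed with a record-to-multiset preprocessing map $A$ yields $P$-PRzCDP with $P(r) = \rho|A(r)|^2$. Accordingly, I would first identify Algorithm~\ref{alg:unit-splitting} as exactly such a preprocessing map: given the threshold function $T$, the algorithm deterministically sends each record $r$ to the multiset of split rows, so that $A(r)$ is well-defined and $|A(r)| = m(r)$ equals the smallest integer $m$ with $mT(a) \geq r(a)$ for all $a \in \mathcal{A}$. This split count is a public function of the confidential record value through $T$, which is precisely the kind of data-dependent policy PRzCDP is designed to permit.

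Second, I would argue that answering any rewritten query from Table~\ref{tab:rewritten_queries} on the split dataframe is a genuine $\rho$-zCDP mechanism. The key is that unit splitting enforces $r_i(a) \leq T(a)$ on every split row (Algorithm~\ref{alg:unit-splitting}, line~4), so the $\ell_2$-sensitivity of each rewritten aggregate with respect to adding or removing a single split row is bounded by the corresponding threshold: a SUM changes by at most $T(a)$, and a COUNT\_DISTINCT changes by at most one. With these bounded sensitivities, releasing the aggregates through the Gaussian mechanism of Definition~\ref{def:gaussian_mech} (with $\sigma$ calibrated to $\Delta$ and $\rho$), then combining and composing them over the workload via Theorems~\ref{thrm:zCDP-postprocessing} and~\ref{thrm:zCDP-sequential}, yields a $\rho$-zCDP mechanism $M$ on $\mathcal{D}^*$.

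Finally, I would compose the two observations: since $M$ is $\rho$-zCDP on the split domain and $A$ is the preprocessing map, Lemma~\ref{lem:pre-processing} immediately gives that $M(A(D))$ satisfies $P$-PRzCDP with $P(r) = \rho|A(r)|^2 = \rho\, m(r)^2$.

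The main obstacle I anticipate is the sensitivity claim for the split aggregates, not the final invocation of Lemma~\ref{lem:pre-processing}. I would need to verify carefully that a neighboring change on the original data, $D \ominus D' = \{r\}$, lifts to a group of exactly $|A(r)|$ split rows in $A(D) \ominus A(D')$ — which is where the $|A(r)|^2$ group-privacy factor enters — and that the rewritten queries are designed so that each split row affects each aggregate only through its bounded per-attribute contribution, so that the sensitivity bound is genuinely $T(a)$ rather than something depending on the unsplit value $r(a)$. This is exactly the property unit splitting was constructed to guarantee, and it is what makes the off-the-shelf Gaussian mechanism applicable with threshold-sized, rather than unbounded, sensitivity.
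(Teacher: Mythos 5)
Your proposal is correct and follows essentially the same route as the paper, which justifies this lemma in a single inline remark: the split-data aggregation is ``simply a zCDP mechanism after a mapping,'' so Lemma~\ref{lem:pre-processing} applies directly with $P(r) = \rho|A(r)|^2$. Your additional care about the threshold-bounded sensitivity $T(a)$ of the rewritten queries and about a neighboring change lifting to $|A(r)|$ split rows is exactly the content already packaged inside Lemma~\ref{lem:pre-processing} and Algorithm~\ref{alg:unit-splitting}, so nothing further is needed.
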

Conditionals such as group-by and filters can be applied to split data without any adaptation, since the conditional attributes are duplicated across all splits. We give an example of answering a private sum below.
\begin{example}\label{ex:sum}
    Consider taking a sum over the Employees column of Table~\ref{tab:ex_splitting}(a). We use the following splitting threshold (Employees: 50, Payroll: \$5,000,000).  The table after splitting can be found in Table~\ref{tab:ex_splitting}(b). After splitting, the maximum value of the Employees column is 50 and each record is split into multiple rows with at most $50$ employees. The same holds for Payroll and its associated threshold. Once the table is split, the sum is taken over all the split rows.\par
    The PRzCDP policy function is implied by the splitting \linebreak threshold. Each record incurs a privacy loss  according to the number of times it is split. Establishments $1$ and $2$ are split $3$ times and incurs a privacy loss of $9\rho$. Establishments $3$ and $4$ are only split twice and each incurs a privacy loss of $4\rho$. Establishment $5$ is never split and incurs a privacy loss of $\rho$.
\end{example}
In this example, we used a sum, but this could include other aggregations such as averages or other more complex zCDP mechanisms such as partition selection \cite{desfontaines_differentially_2022}, matrix mechanism \cite{li_matrix_2015} among others. One benefit of unit splitting is bounding the sensitivity of previously unbounded queries.
\begin{example}\label{ex:sum2}
    Consider taking a sum over the Employees column of Table~\ref{tab:ex_splitting}(a). Prior to unit splitting, the maximum possible number of employees for any arbitrary establishment is unbounded. There is no limit to the number of employees  an establishment can have. After the splitting, the maximum value of the Employees column is set to 50, introducing a bound. Since the maximum number of employees in any record is 50, the sensitivity of the sum over employees is also 50.
\end{example}
In this case, the previously unbounded sensitivity is now bounded by the unit splitting algorithm. Traditionally, one would set a clamping bound on the sum, which would truncate all the values outside the bound to one of the boundary values. For heavily skewed data, this can either introduce bias (if the bound is too small) or a large amount of noise (if the bound is too large). When using unit splitting, this is no longer a concern, as the splitting threshold can be set low enough to avoid incurring too much noise. The tradeoff in this case is that the larger records incur a larger privacy loss overall. This makes PRzCDP particularly powerful in the case of highly skewed data, where a small amount of large records makes it impractical to introduce clamping bounds. Instead, these records are split into many smaller records and incur a larger privacy loss as a result.

\begin{table*}[t]
    \centering
    \begin{tabular}{|c|c|c|}
        \hline
        Original query & Rewritten Exact Query  & PRzCDP $P(r)$ \\
        \hline 
        \texttt{COUNT(ROW\_ID)} & \texttt{COUNT\_DISTINCT(ROW\_ID)}  & $\rho$ \\
        \hline 
        \texttt{SUM($\cdot$)} & \texttt{SUM($\cdot$)}  & $\rho |A(r)|^2$ \\
        \hline 
        \texttt{AVG($\cdot$)} & \texttt{SUM($\cdot$) / COUNT\_DISTINCT(ROW\_ID)}  & $\rho |A(r)|^2$ \\
        \hline
    \end{tabular}
    \caption{Common exact queries and their reconstruction strategies}
    \label{tab:rewritten_queries}
\end{table*}
\subsection{Multiple Aggregations}
While unit splitting is a versatile technique in its own right, much of its power comes from its ability to compose neatly with much of the existing DP literature as well as other instances of unit splitting. While any two PRzCDP mechanisms compose together due to Lemma~\ref{lem:PRzCDP-seq}, this also holds for mechanisms run prior to the unit splitting. This is because $\rho$-zCDP can be seen as a special case of PRzCDP where the policy function is equal to the privacy parameter $\rho$.
\begin{lemma}\label{lem:PRzCDP_special_case}
    Let $M$ be a randomized mechanism which satisfies $\rho$-zCDP. Then $M$ also satisfies $P$-PRzCDP where $P(r) = \rho$.
\end{lemma}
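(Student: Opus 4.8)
The plan is to simply unwind the two definitions and observe that a constant policy function reduces PRzCDP to ordinary zCDP. Concretely, I would fix the constant policy $P(r) = \rho$ for all $r \in \mathcal{T}$ and then verify the defining inequality of Definition~\ref{def:PRzCDP} directly from the hypothesis that $M$ is $\rho$-zCDP (Definition~\ref{def:zCDP}).

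The key step is to reconcile the two notions of neighboring used in the respective definitions. zCDP quantifies over neighboring databases $D' \approx D$, meaning $D$ and $D'$ differ by the addition or removal of a single row, while PRzCDP quantifies over pairs with symmetric difference $D \ominus D' = \{ r \}$. I would note that these two conditions describe exactly the same pairs of databases: a symmetric difference equal to a singleton $\{ r \}$ is precisely the statement that one database is obtained from the other by adding or removing the single row $r$. Hence every pair $(D, D')$ relevant to PRzCDP is a neighboring pair in the sense of zCDP.

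With this identification in hand, the remaining argument is immediate. Fix any $D, D'$ with $D \ominus D' = \{ r \}$ and any $\alpha \in (1, \infty)$. Since $D$ and $D'$ are neighboring and $M$ is $\rho$-zCDP, Definition~\ref{def:zCDP} gives $d_\alpha(M(D) \| M(D')) \leq \alpha \rho$. Because $P(r) = \rho$ by choice, this is exactly $d_\alpha(M(D) \| M(D')) \leq \alpha P(r)$, the defining condition of $P$-PRzCDP. As $D, D', \alpha$ were arbitrary, $M$ satisfies $P$-PRzCDP.

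There is essentially no technical obstacle here; the content of the lemma is purely definitional, establishing that zCDP sits inside PRzCDP as the special case of a constant policy. The only points requiring care are the bookkeeping between the symmetric-difference formulation of neighboring and the add/remove-one-row formulation, and confirming that the divergence notation $D_\alpha$ of Definition~\ref{def:zCDP} and $d_\alpha$ of Definition~\ref{def:PRzCDP} both refer to the same R\'enyi divergence of order $\alpha$.
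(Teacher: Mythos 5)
Your proof is correct and matches the paper's treatment: the paper gives no separate argument for Lemma~\ref{lem:PRzCDP_special_case}, regarding it as immediate since (as stated after Definition~\ref{def:PRzCDP}) zCDP is exactly the special case of PRzCDP with a constant policy function. Your definitional unwinding, including the observation that the symmetric-difference and add/remove-one-row notions of neighboring coincide, is precisely the intended justification.
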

This allows mechanisms run prior to unit splitting to compose with those run after unit splitting by using Lemma~\ref{lem:PRzCDP-seq}.

PRzCDP excels in reducing the sensitivity of high or unbounded sensitivity queries, such as sums or means. However, when computing queries with low sensitivities such as counts or medians, it is easier to use zCDP to compute these counts prior to unit splitting. In these cases, each record only incurs a constant privacy loss (the $\rho$ parameter given to the zCDP mechanism) as opposed to the variable privacy loss given by PRzCDP and unit splitting. We demonstrate this in the following example where zCDP is used to compute a median followed by a sum computed on unit split data.
\begin{example} \label{ex:median_sum}
    Consider the sum from Example~\ref{ex:sum} with the addition of a median query on the Payroll column prior to the unit splitting process. If we first compute the median with a privacy budget of $\rho_1$, then compute the sum with privacy-loss budget $\rho_2$, then by Lemma~\ref{lem:PRzCDP-seq} the combination of the two mechanisms satisfies PRzCDP with a policy function of $\rho_1 + \rho_2 |A(r)|^2$, where $|A(r)|$ is the maximum number of times record $r$ is split. In this case, Establishments $1$ and $2$ incur $\rho_1 + 9\rho_2$ privacy loss, Establishments $3$ and $4$ incur $\rho_1 + 4\rho_2$ privacy loss, and Establishment $5$ incurs $\rho_1 +\rho_2$ privacy loss.
\end{example}
By using traditional zCDP to compute the median, each record only incurs a constant ($\rho_1$) privacy loss. Had that median been computed after the unit splitting, the larger records would have incurred $9\rho_1$ privacy loss instead, a significant increase.
In these cases, zCDP mechanisms can be used for tasks that are not subject to high sensitivity, such as medians, or complex tasks for which no unit split equivalent is available, such as stochastic gradient descent \cite{AbadiCGMMT016}. This allows for tighter analysis when using low sensitivity queries and  opens up the vast literature of differentially private techniques for use alongside unit splitting.

\subsection{Answer GroupBy Aggregation Queries Using Unit Splitting.}\label{sec:unit-splitting-groupby}

In addition to aggregations such as sums and counts, unit splitting also supports conditional analysis such as filters and group-by. Filters and group-by can be applied directly on the conditional attributes after unit splitting, since those attributes are duplicated across all splits. This allows for individual analysis for each group. \par 
Lemma~\ref{lem:PRzCDP-par} allows a practitioner to apply different splitting thresholds for each group in order to better serve the needs of each group. For example, consider if we added Technology as an additional industry in Table~\ref{tab:ex_splitting}. Technology firms have significantly higher average pay than agricultural establishments and as such have a significantly higher payroll. In such cases, a different set of splitting thresholds may be necessary for technology firms to avoid extremely high privacy loss. \par 
In cases where the group-by keyset is unknown, or is sparse in the domain of the attribute one can use a private partition selection algorithm, after the unit splitting process. Since the data has been split prior to partition selection, large records are instead split into many small records and are as such more likely to be discovered. We demonstrate an example of using all three techniques; group-by, multiple splitting thresholds, and split partition selection.
\begin{example}\label{ex:SQL}
    Consider taking two sums over the Employee column and Payroll column of Table~\ref{tab:ex_splitting}(a), grouped by the values of the Industry column. In addition, consider the following splitting  thresholds for each industry. For agriculture and retail, use the previous splitting threshold of (Employees: 50, Payroll: \$5,000,000). For mining, apply the splitting threshold of (Employees: 50, Payroll: \$10,000,000). \par 
    
    To compute the sum, we need to complete three steps. First apply unit splitting to each industry with their own splitting threshold. This bounds the sensitivity of both the Employees and Payroll column, however this bound is now different for each industry. Then use some privacy budget $\rho_1$ to use a partition selection technique \cite{desfontaines_differentially_2022} to find the keyset for the group-by. Since each industry is split prior to the partition selection, those with few but large records still have a high probability to be in the resulting keyset. Then for each category in the Industry column, we take the sum over employees using the Gaussian mechanism with sensitivity $50$  and privacy budget $\rho_2$. This sensitivity remains the same since the employee splitting threshold is the same for all industries.
   For sum over payroll, use the Gaussian mechanism with sensitivity $5,000,000$ for agriculture and retail and use sensitivity $10,000,000$ for mining. For both, we will use the same privacy budget $\rho_3$. \par  Since the partitions over the industry column form disjoint subsets of the dataset by Lemma~\ref{lem:PRzCDP-par}, the sum over Employees and Payroll satisfy $\rho_2|A(r)|^2$ and $\rho_3 |A(r)|^2$-PRzCDP respectively where $|A(r)|$ denotes the number of splits for each record for their respective splitting thresholds. We can combine all of these policy functions together using Lemma~\ref{lem:PRzCDP-seq} to get that the overall mechanism satisfies $P$-PRzCDP with a policy function $P(r) =  (\rho_1 + \rho_2 +\rho_3)|A(r)|^2$. For records 1 and 2, the final privacy loss is $9(\rho_1 + \rho_2 + \rho_3)$ since they are each split into 3 rows. Record 3 incurs $4(\rho_1 +\rho_2 + \rho_3)$ privacy loss since it is split into 2 rows under the new splitting threshold, and record 4 only incurs $\rho_1 +\rho_2 + \rho_3$ privacy loss since it is not split under the new splitting thresholds.
\end{example}
Since these were disjoint sections of the database, we could apply different splitting thresholds to each disjoint section and still satisfy $P$-PRzCDP. In this case, the new policy function would be a piece-wise function, giving a different functional form for each industry.
\par
Due to the initial unit splitting, the partition selection step has a high probability of selecting each of the populated industries, even if those industries are populated by few but large establishments. This allows practitioners to properly analyze heavily skewed data where much of the data is compressed into relatively few records, a phenomenon which often happens in economic or population statistics.

\section{Experiments}
\label{sec:experiments}
In this section, we empirically demonstrate the effectiveness of PRzCDP when applied to skewed data. In Section \ref{sec:exp_negative} we demonstrate how the high bias and error from global zCDP results in an unacceptable trade-off between privacy and utility. Then, in Section \ref{sec:exp_skewed}, we focus on univariate queries to show how PRzCDP can be an effective alternative to zCDP with modest reductions in privacy. Finally, in Section \ref{sec:exp_cbp}, we demonstrate the methodology on our motivating use case, CBP. 

\subsection{Setup and Datasets}
\label{sec:exp_setup}
 For each experiment, we answer queries using either zCDP or PRzCDP according to the Gaussian mechanism (Def. \ref{def:gaussian_mech}) with clamping-enforced sensitivity $\Delta$, noise variance $\sigma^2$, and privacy loss $\rho = \frac{\Delta^2}{2\sigma^2}$, or unit splitting pre-processing (Algorithm \ref{alg:unit-splitting}) with additive Gaussian noise with variance $\sigma^2$ according to different splitting functions $T$. We use the following metrics throughout. Where contextually appropriate, we abuse notation and only include the relevant arguments.

    \textbf{Policy loss}: for $r \in D$, we have
    \begin{equation}
        \mathrm{PolicyLoss}(P, r) \triangleq P(r)
    \end{equation}
    Note that in practice, we cannot release $\mathrm{PolicyLoss}(P, r)$, only the functional form $P(\cdot)$. \par 
    \textbf{Realized loss}: for a record $r$, the realized dataset $D$, and mechanism $M$, 
    \begin{align}
        &\mathrm{RealizedLoss}(M, D, r) \\
        &\triangleq \sup_{\alpha \in (1, \infty)} \frac{d_\alpha(M(D) || M(D \setminus \{r \}))}{\alpha}
    \end{align}
    By construction, when $M$ satisfies $P$-PRzCDP, 
    \begin{equation}
        \mathrm{PolicyLoss}(P, r) \geq \mathrm{RealizedLoss}(M, D, r)
    \end{equation}
    for all $r \in D$. Again, in practice, we cannot release \linebreak $\mathrm{RealizedLoss}(M, D, r)$ nor its functional form, as it depends on the realized dataset $D$. \par 
    \textbf{Query relative error}: for a dataset $D$, query $M(D)$, and non-private answer $S(D)$, define
    \begin{align}
        &\mathrm{QueryRelErr}(M, S, D, \gamma) \\
        &\triangleq \min\left\{ v \in \mathbb{R}^+ \mid \p\left( \frac{|M(D) - S(D)|}{S(D)} \geq v \right) \leq 1 - \gamma \right\}
    \end{align} \par 
    \textbf{Absolute relative error (ARE)}: for a given output $M(D)$ and non-private answer $S(D)$, define
    \begin{align}
        \mathrm{ARE}(M(D), S(D)) \triangleq \frac{|M(D) - S(D)|}{|S(D)|}
    \end{align}\par 
    
     \textbf{Policy minimum}: for a policy function $P$, we will use \linebreak $\mathrm{PolicyMin}(P) = \min_{r \in \mathcal{T}} P(r)$ to denote the smallest policy loss associated with one record. For unit splitting algorithms, this can be interpreted as the policy loss for records unaffected by unit splitting.

Our experiments are run on three different datasets: a simulated dataset (SIM), the National Agricultural Statistical Service Cattle Inventory Survey (CIS), and a County Business Patterns (CBP) synthetic dataset. \par

The first dataset is simulated data for which we know the precise data generating distribution. This allows us to illustrate the kinds of heavy-tailed behavior that our methodology can better accommodate, as opposed to global DP. 

We simulate two heavy-tailed variables in $[1, \infty)$ with tail index parameters $\alpha$, where smaller values of $\alpha$ correspond to heavier tails. The simulated variables are listed in Table \ref{tab:sim_ht_vars}; note that both variables HT1 and HT2 have infinite variance. Our goal is to answer sum queries of HT1 and HT2 grouped by CatIX. 

\begin{table}[!htbp]
    \centering
    \begin{tabular}{|c|c|c|}
        \hline
        Name & Domain & Distribution \\
        \hline
        $\mathrm{CatIX}$ & $\{ 1, \dots, 1000\}$ & $\mathrm{Categorical}(\phi)$ \\
        \hline
        $\mathrm{HT1}$ & $[1, \infty)$ & $\mathrm{Pareto}(1, 1.2)$ \\
        \hline
        $\mathrm{HT2}$ & $[1, \infty)$ & $\mathrm{Pareto}(1, 1.5)$ \\
        \hline
    \end{tabular}
    \caption{Simulated heavy-tailed variables}
    \label{tab:sim_ht_vars}
\end{table}

The second dataset is from the U.S. Department of Agriculture (USDA)'s Cattle Inventory survey (CIS), managed by the National Agricultural Statistical Service (NASS) \cite{agcensus}. Our records consist of county-level survey records of total cattle inventory and average pastureland rent cost (in dollars per acre). Each county geography is contained hierarchically within an agricultural district (AD), itself contained within a particular state. We will treat these records as our privacy units, since the individual farm-level records are not publicly available. Still, a small number of records contribute to the majority of the total cattle inventory in any particular state or AD, making the methodology applicable. We consider the queries in Table \ref{tab:cis_queries} under different privacy loss allocations and splitting thresholds shown in the figures. The $P$-PRzCDP queries are answered using Algorithm \ref{alg:unit-splitting} and the $\rho$-zCDP queries are answered using the Gaussian mechanism from Definition \ref{def:gaussian_mech}. 

\begin{table}[!htbp]
    \centering
    \begin{tabular}{|c|c|c|}
        \hline
        Geographies & Query & Formalism \\
        \hline 
        State & SUM(CattleInventory) & $P$-PRzCDP \\
        \hline
        State & AVG(PastureRent) & $\rho$-zCDP \\
        \hline 
        State x AD & SUM(CattleInventory) & $P$-PRzCDP \\
        \hline
        State x AD & AVG(PastureRent) & $\rho$-zCDP \\
        \hline
    \end{tabular}
    \caption{CIS Queries}
    \label{tab:cis_queries}
\end{table}

The final dataset is a proposed use case involving summations over skewed data: the County Business Patterns (CBP) dataset, published by the U.S. Census Bureau. We use synthetic data provided by the U.S. Census Bureau to demonstrate PRzCDP methodology on these synthetic records. Each row in our tabular data represents an ``establishment", or one separate unit of a business; ``firms" represent one or more establishments that operate as a single business venture. Our goal is to release the following information about groups of establishments: 
\begin{itemize}
    \item ESTAB: A count of the number of establishments.
    \item PAYANN: A sum of annual payrolls of establishments.
    \item PAYQTR1: A sum of first quarter payrolls of establishments.
    \item EMP: A sum over employee size of establishments.
\end{itemize}
The groups of establishments correspond to different geographic areas (such as counties, ZIP codes, or congressional distributions) and different industry classifications using the North American Industry Classification System (NAICS) codes (such as finance, real estate, agriculture, etc.). For the purposes of this experiment, we consider the subset of all county-level queries at every possible NAICS classification level with no cross-tabulations; moreover, we limit our evaluations to only those queries with 100 or more establishments.

\begin{table}[!htbp]
    \centering
    \begin{tabular}{|c|c|c|}
        \hline
        Geographies & Query & Formalism \\
        \hline 
        County x NAICS* & COUNT(ESTAB) & $\rho$-zCDP \\
        \hline
        County x NAICS* & SUM(EMP) & $P$-PRzCDP \\
        \hline
        County x NAICS* & SUM(PAYANN) & $P$-PRzCDP \\
        \hline
        County x NAICS* & SUM(PAYQTR1) & $P$-PRzCDP \\
        \hline
    \end{tabular}
    \caption{CBP Queries}
    \label{tab:cbp_queries}
\end{table}

\subsection{Global zCDP on Skewed data}
\label{sec:exp_negative}

First, we show how a theoretical analysis of the $\rho$-zCDP Gaussian mechanism for sums on heavy-tailed random variables fails to yield reasonable trade-offs between privacy and utility. Specific to our simulation study, we consider the theoretical mean-square error (MSE) of estimators truncated with high probability from heavy tails. In Figure \ref{fig:sim_mse_global}, we fix $n=1000$ and plot the theoretical mean-square error (MSE) over the sum's expected value as a measure of "noise-to-signal" on the y-axis. We show how this ratio varies with different sensitivities $\Delta$ on the x-axis, privacy losses $\rho$, and tail weights $\alpha$. Note that in every case, privacy-preserving noise exceeds the sum's expected value by orders of magnitude. For each configuration, we additionally calculate the optimal $\Delta$ for given $\alpha$ and $\rho$ values which minimize the ratio (shown as the vertical red lines in the subplots). As expected, the optimal $\Delta$ value to minimize MSE increases as $\alpha$ decreases and as $\rho$ increases; however, even at these optimal $\Delta$ values, the errors are prohibitively large. Recall from \cite{bun_concentrated_2016} that Gaussian noise is tight for zCDP summation queries, meaning any $\rho$-zCDP mechanism requires noise with variance at least $\Omega(\Delta/\rho^2)$, i.e. as a function of the volume of the sum query space. So even for modest tail weights, the cost of ensuring each record lies in a bounded domain makes it near-impossible to simultaneously maintain modest global privacy losses and MSE guarantees. 

\label{sec:exp_sim}
\begin{figure}
    \centering
    \includegraphics[width=.45\textwidth]{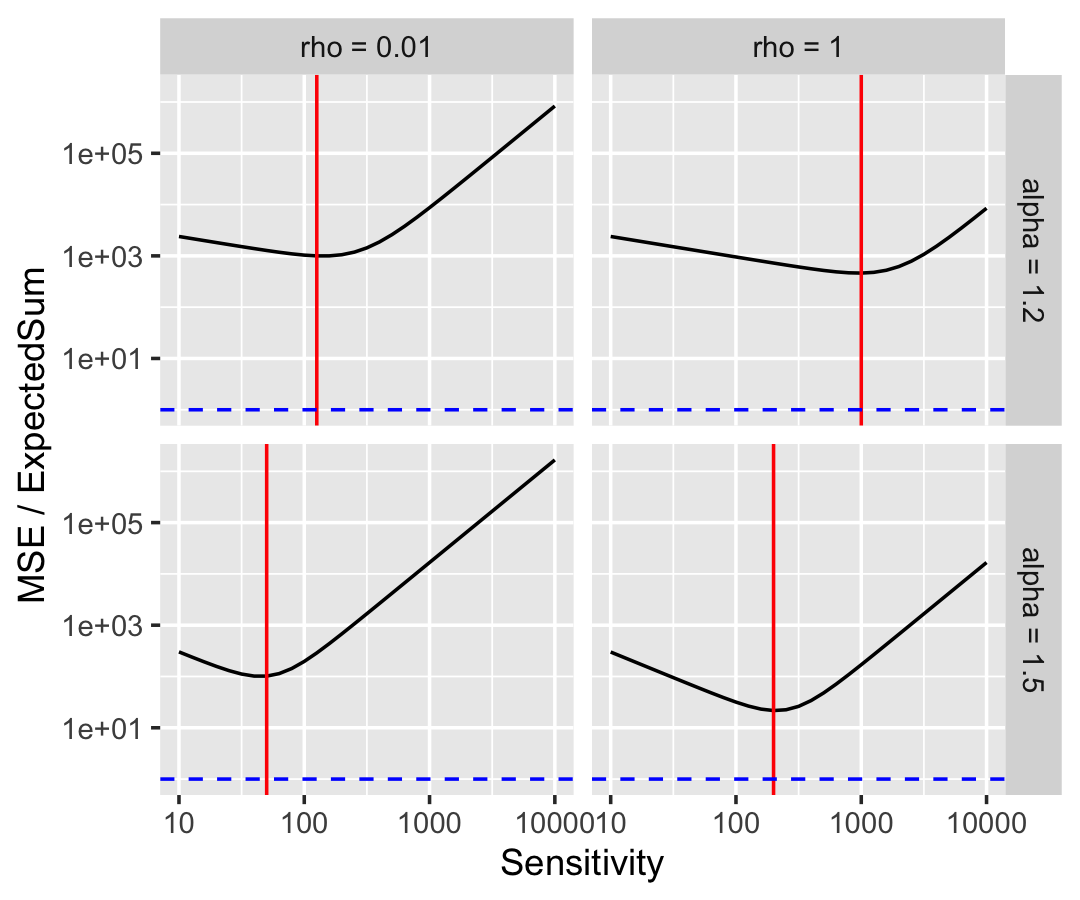}
    \caption{Theoretical MSE over the expected query value for global $\rho$-zCDP mechanisms with different global sensitivities $\Delta$, privacy loss budgets $\rho$, and different tail parameters $\alpha$ for $n = 1000$. Optimal $\Delta$ for minimizing MSE given $\rho$ and $\alpha$ shown in red. Blue dashed line at 1, for reference.}
    \label{fig:sim_mse_global}
\end{figure}

\subsection{PRzCDP privacy-utility trade-offs with univariate splitting}
\label{sec:exp_skewed}
Next, we show how using Algorithm \ref{alg:unit-splitting} in conjunction with the Gaussian mechanism offers significant improvements to utility with a cost to privacy loss that only affects a small number of units. In Figure \ref{fig:sim_cis_priv_vs_util}, we use our proposed method to answer queries on the workloads with different univariate unit splitting thresholds (STs) for one variable of interest (HT1 for SIM, CattleInventory for CIS). The left set of subplots show the workload AREs (y-axis) at different STs; as ST decreases, the proportion of records that are split (for which policy loss is greater than $\rho$) increases, shown on the x-axis. To simplify, utility increases (y-axis distributions shift downward) as privacy loss increases (x-axis boxplots shift to the right). The plots show that ARE improves significantly, while policy loss for most records remains the same as the global zCDP counterpart. For example, at $\rho = 1$ for the simulated data, we can achieve a median 10\% ARE across queries while ensuring less than 1\% of records have policy loss greater than $\rho = 1$. 

On the right-hand side of Figure \ref{fig:sim_cis_priv_vs_util}, we show a more detailed view of the policy loss functions by visualizing their empirical CDFs: namely, for any one unit splitting configuration, what proportion of records (y-axis) have policy losses less than a particular value (x-axis)? We show this for different STs and $\rho$. As $\rho$ increases, the CDFs shift to the right, as expected since less noise injection increases policy loss uniformly across records. Larger STs correspond to more conservative unit splitting schemes, ensuring that a greater proportion of records have the smallest possible policy loss. As ST decreases, the policy loss grows more rapidly for larger units, which are split more frequently. These plots demonstrate how $\rho$ toggles the privacy-utility trade-off for all records, whereas ST toggles how fast the policy loss grows as records become more skewed.

\begin{figure*}[t]
    \centering
    a) Simulated Pareto data \\
    \includegraphics[width=.49\textwidth]{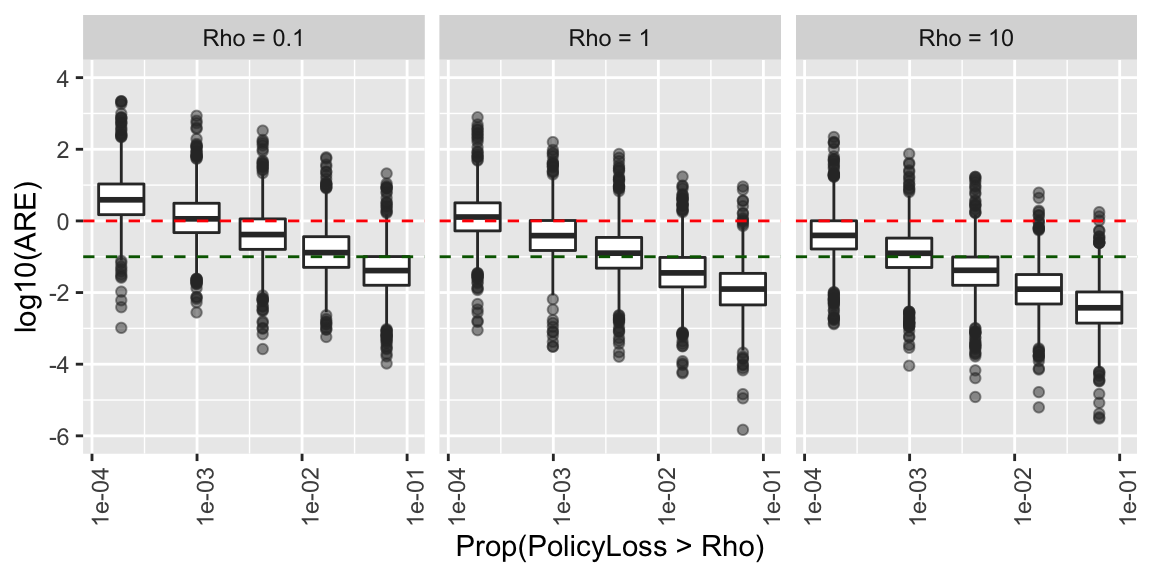}
    \includegraphics[width=.49\textwidth]{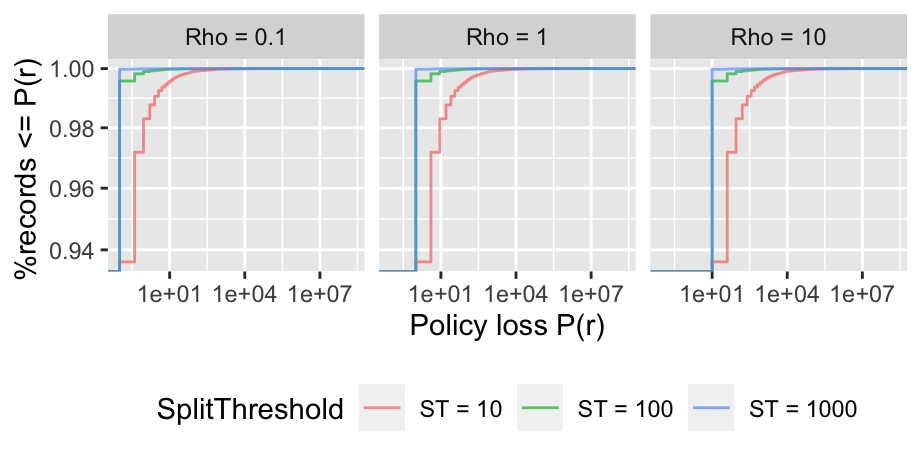}
    b) USDA Cattle Inventory Study \\
    \includegraphics[width=.49\textwidth]{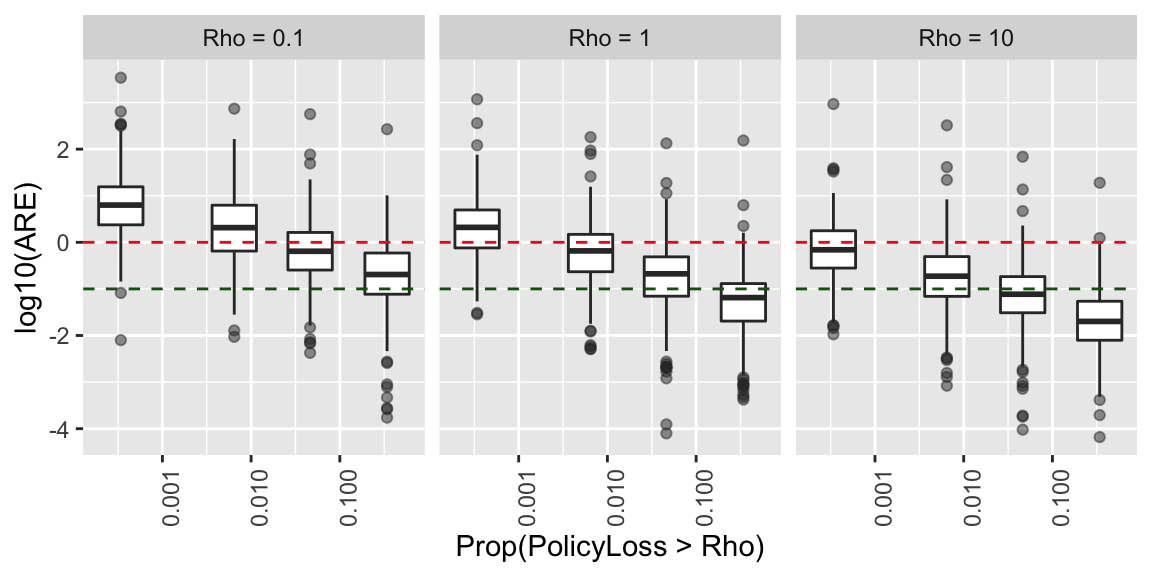}
    \includegraphics[width=.49\textwidth]{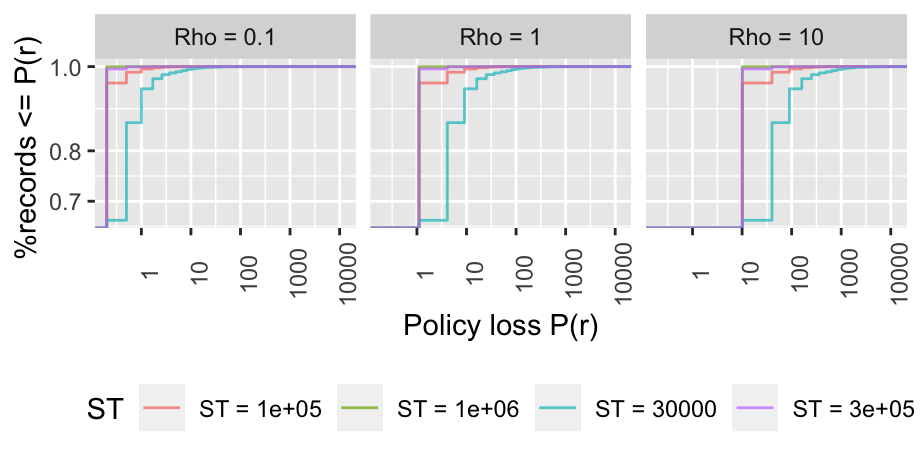}
    \caption{(Left) distribution of ARE over workload queries (y-axis) by proportion of records with policy loss greater than $\rho$ (x-axis). (Right) Empirical CDFs of policy loss, i.e. proportion of observed records (y-axis) with policy loss bounded by $P(r)$ (x-axis). Columnar subplots show different levels of minimum policy loss $\rho$. Red line represents 100\% ARE and green line represents 10\% ARE.}
    \label{fig:sim_cis_priv_vs_util}
\end{figure*}

\subsection{End-to-end example: County Business Patterns dataset}
\label{sec:exp_cbp}
We now turn towards a more complex, realistic application of our methodology to CBP. The query workload is described in Table \ref{tab:cbp_queries}; answering these queries requires leveraging more features of our proposed framework. First, we consider multivariate unit splitting as a function of multiple attributes per record. Second, we combine zCDP with PRzCDP queries. Third, we use both sequential and parallel composition simultaneously to answer queries about the full workload.

\begin{figure*}[t]
    \centering
    \includegraphics[width=.48\textwidth]{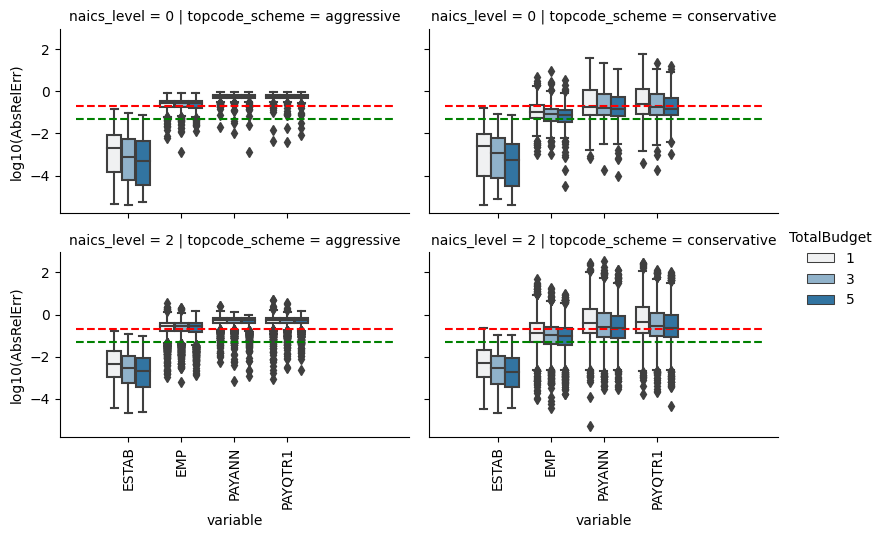}
    \includegraphics[width=.48\textwidth]{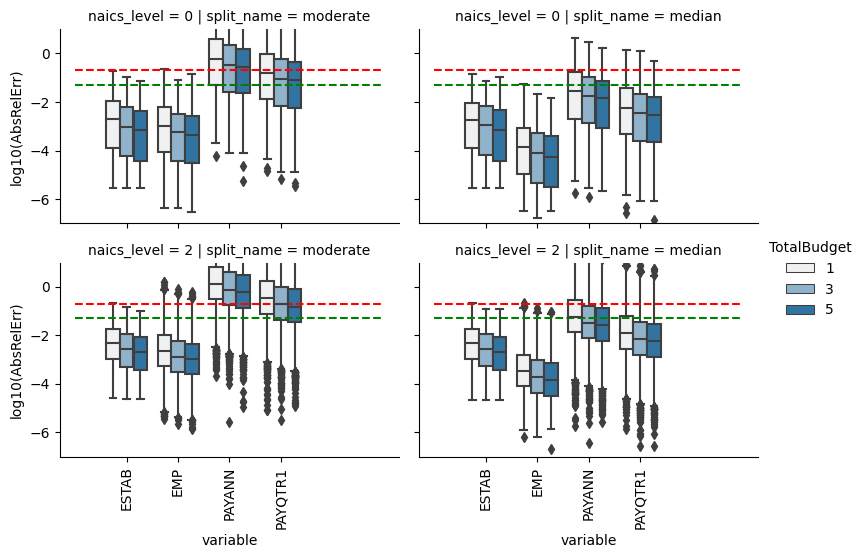}
    \caption{AREs for the CBP query workload using topcoding and zCDP (left) versus using unit splitting (right) for different NAICS levels (rows) and splitting schemes (columns). Red line represents 20\% ARE and green line represents 5\% ARE. }
    \label{fig:cbp_ares}
\end{figure*}

We consider two different algorithmic approaches for answering the CBP query workload. First, we consider zCDP mechanisms using sensitivities defined by three possible sets of ``top-codes" in Table \ref{tab:top_codes}, which we name ``Conservative", ``Moderate", and ``Aggressive", in decreasing  order. Second, we consider PRzCDP based on splitting schemes listed in Table \ref{tab:splitting_thresholds}, which we name ``Conservative", ``Moderate", and ``Median". The three schemes are listed in increasing split cardinality order. 

\begin{table}[!htpb]
    \centering
    \begin{tabular}{|c|c|c|}
        \hline
        Top-code scheme name & Attribute & Value \\
        \hline
        \multirow{3}{*}{Conservative} & EMP & $10^3$ \\ \cline{2-3}
         & PAYANN & $10^5$ \\ \cline{2-3}
         & PAYQTR1 & $2.5*10^4$ \\
        \hline
        \multirow{3}{*}{Moderate} & EMP & $3*10^2$ \\ \cline{2-3}
         & PAYANN & $10^4$ \\ \cline{2-3}
         & PAYQTR1 & $2.5*10^3$ \\
        \hline
        \multirow{3}{*}{Aggressive} & EMP & $10^2$ \\ \cline{2-3}
         & PAYANN & $10^3$ \\ \cline{2-3}
         & PAYQTR1 & $2.5*10^2$ \\
        \hline
        
    \end{tabular}
    \caption{Top-code scheme description.}
    \label{tab:top_codes}
\end{table}

\begin{table}[!htpb]
    \centering
    \begin{tabular}{|c|c|c|c|}
        \hline
        SchemeName & SplitAttribute & SplitThreshold & PctScore \\
        \hline
        \multirow{3}{*}{Conservative} & PAYANN & 10000 & 99\% \\ \cline{2-4}
         & PAYQTR1 & 2500 & 99\% \\ \cline{2-4}
         & EMP & 100 & 97\% \\ 
        \hline
        \multirow{3}{*}{Moderate} & PAYANN & 500 & 79\% \\ \cline{2-4}
         & PAYQTR1 & 125 & 80\% \\ \cline{2-4}
         & EMP & 5 & 66\% \\ 
        \hline
        \multirow{3}{*}{Median} & PAYANN & 104 & 50\% \\ \cline{2-4}
         & PAYQTR1 & 24 & 50\% \\ \cline{2-4}
         & EMP & 2 & 47\% \\
        \hline 
    \end{tabular}
    \caption{Establishment splitting thresholds for three different splitting schemes and their associated percentiles of score in the CBP synthetic data. }
    \label{tab:splitting_thresholds}
\end{table}

In Figure \ref{fig:cbp_ares} we plot the ARE of each query for the top-coding algorithm and the establishment splitting algorithm, respectively. The results are aggregated by attribute, total privacy loss budget, NAICS level, and algorithmic configuration. The green and red dashed lines mark the 5\% and 20\% ARE thresholds, respectively, representing example fitness-for-use goals. First, we expect the relative errors for counting attributes (i.e. ESTAB and FIRM) to have the same distribution for either algorithm, as they are unaffected by establishment splitting.  However, when we look at the magnitude attributes (EMP, PAYANN, and PAYQTR1), none of the top-coding schemes on the left come close to providing reasonable AREs, since the majority of the box plot masses for these queries are above the dashed green line. Alternatively, on the right, we see that establishment splitting provides far smaller relative errors, even for more granular queries at finer NAICS levels (although the relative errors increase as the NAICS level increases, as expected).

\begin{figure*}[t]
    \centering
    \includegraphics[width=.7 \textwidth]{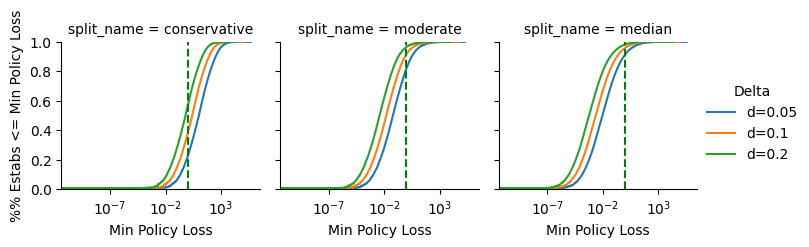}
    \caption{Theoretical minimum policy function CDFs to achieve different fitness-for-use goals on 95\% of the COUNTY by NAICS code query workload. The green dashed line represents the total unsplit privacy loss budget of 1.}
    \label{fig:ffu_split}
\end{figure*}

\begin{figure}
    \centering
    \includegraphics[width=.48\textwidth]{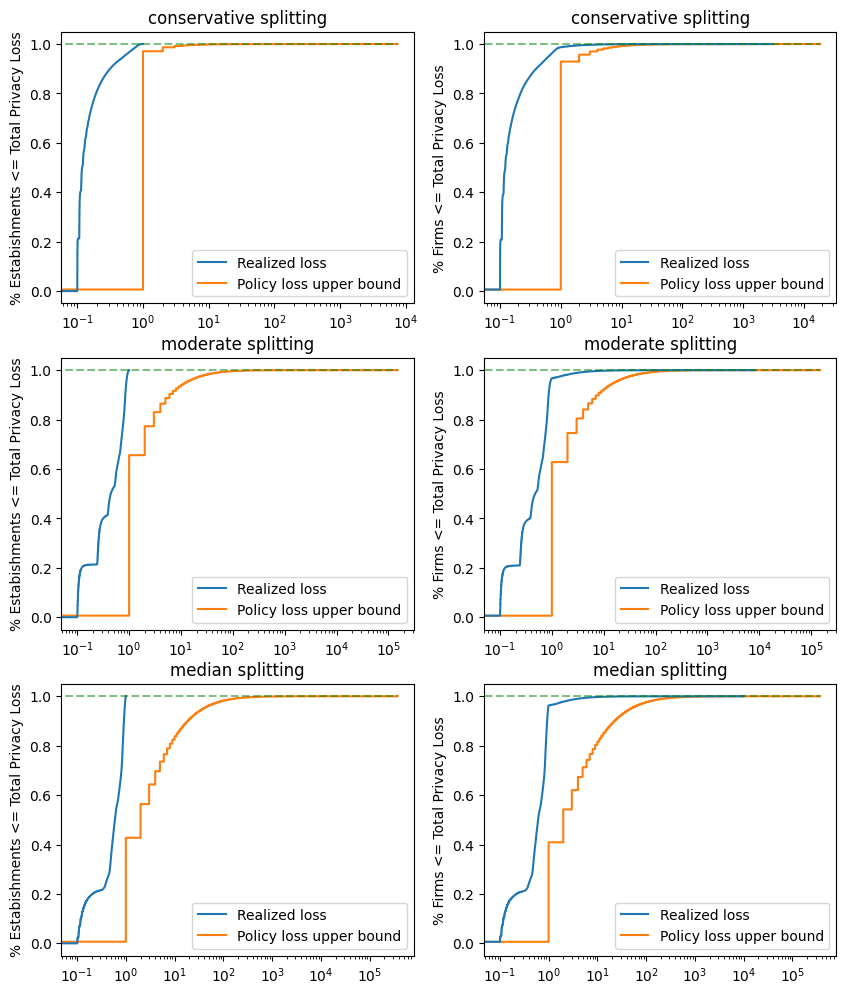}
    \caption{CBP policy losses grouped by establishment (left) and firm (right)}
    \label{fig:cbp_losses}
\end{figure}

Similarly, in Figure \ref{fig:cbp_losses}, we plot the policy loss CDFs for each splitting scheme, which can be interpreted similarly to the policy loss CDFs in Figure \ref{fig:sim_cis_priv_vs_util} with a few differences worth highlighting. First, by PRzCDP's group composition properties, we can extend the results from the left column of establishment subplots to the right column of firm subplots. Since the majority of establishments in the synthetic data have a unique ID, the plots look very similar; however, the firm-level CDFs sit slightly below the establishment level CDFs. This demonstrates how establishment-level guarantees are conferred to firms. Second, we additionally calculate the \emph{realized} privacy losses for each establishment and firm. Specifically, this calculates the log-max divergence between the establishment splitting outputs using the specific CBP synthetic dataset with and without the establishment (or firm) of interest. By construction, the realized loss is always less than the policy loss, so the blue CDF line will always be to the left of the orange policy loss line. First, we observe that for the majority of establishments, the realized privacy loss is significantly lower than the policy upper bound. Moreover, because this gap depends on the number of queries answered in the workload, we can reasonably expect this gap to be larger when considering the entire CBP workload, not just county-level queries. Next, we observe that, as the splitting thresholds decrease, the gap between the realized privacy loss and the policy upper bound decreases. 

Finally, instead of considering the realized relative errors, we ask: what is the \emph{smallest} possible policy function which ensures we reach a particular fitness-for-use goal? Specifically, we calculate the smallest policy loss function for each entity where we assume that for each query in the county workload, we have at theoretical query relative error of less than $\delta$ with probability at least 95\%. We plot the implied policy loss CDFs in Figure \ref{fig:ffu_split}. As expected, we require smaller policy losses for the majority of establishments as the splitting thresholds get smaller and smaller, since we are incurring larger privacy losses for larger establishments. Additionally, as expected, when $\delta$ increases, the distribution of the minimum policy loss subsequently decreases (i.e., the CDFs are shifted to the left). All this demonstrates that with these splitting schemes, fitness-for-use goals are more feasible than under the traditional top-coding assessment.

\section{Conclusion}
To summarize, we introduced PRzCDP to transparently encode dependencies between per-record privacy loss and confidential records. This relaxation of traditional DP notions helps answer SQL-style queries over skewed data where approaches like zCDP may fail to offer reasonable privacy-utility trade-offs. By making the policy function public, we offer a new way of describing privacy loss in cases where a small minority of records pose exorbitant privacy risks that aren't representative of the entire dataset. Such policy functions are particularly useful when the unit of privacy analysis is not an individual person, but a group of people in a business establishment or other organization where there may be different social privacy expectations for large versus small groups. We additionally offer a way of indirectly setting the policy function through unit splitting, a pre-processing step that composes with DP algorithms to provide PRzCDP guarantees by construction. Our experiments applying this technique to simulated data, cattle inventory data, and business pattern data demonstrate how PRzCDP can better answer realistic SQL-style query workloads on skewed data without relying on zCDP's worst-case analysis.
Future work beyond the scope of the article could more formally characterize the semantic guarantees offered by PRzCDP. Techniques like unit splitting intrinsically leak more information about confidential records when records are split with finer granularity. Understanding the kinds of queries that could be leveraged to learn confidential information via the policy function require further investigation. While this paper focuses on static data dissemination settings, such extensions would be helpful for using PRzCDP in more general (possibly interactive) query systems. Similarly, future work could explore different techniques for choosing how privacy loss scales with record values. While we considered quadratic dependence between record values and policy loss, using $\epsilon$-DP style semantics could yield linear dependence instead. Alternatively, additional pre-processing and post-processing steps could enable new ways of scaling this dependence. 
\begin{acks}
    We would like to thank Margaret Beckom, Anthony Caruso, William Davie Jr, Ian Schmutte, and Brian Finley from the U.S. Census Bureau and Zach Terner from MITRE for their valuable insight and feedback.
\end{acks}
\bibliographystyle{ACM-Reference-Format}
\bibliography{references}

\appendix

\section{Omitted Proofs}

\subsection{Proof of Lemma \ref{lem:PRzCDP-seq}}
Fix $D \in \mathcal{D}$ and let $D' \in \mathcal{D}$ such that $D \ominus D' = \{ r' \}$. Let $Y_1, Y_2$ be the outputs from $M_1(D)$ and $M_2(D)$, respectively, and let $B_1 \in \mathcal{F}_{Y_1}$ and $B_2 \in \mathcal{F}_{Y_2}$. Then:
\begin{align*}
    d_\alpha\left(\left( M_1(D), M_2(D) \right) \mid \left( M_1(D'), M_2(D') \right) \right) \\ = d_\alpha(M_1(D) \mid M_1(D')) + d_\alpha(M_2(D) \mid M_2(D')) \\
    \leq \alpha P_1(r) + \alpha P_2(r) \\
    = \alpha \left( P_1(r) + P_2(r) \right)
\end{align*}

\subsection{Proof of Lemma \ref{lem:PRzCDP-par}}
Without loss of generality, let $D,D' \in \mathcal{D}$ such that $D \ominus D' \triangleq \{ r'\} \subseteq C_1$. Then:
\begin{align*}
D_\alpha(M(D) || M(D')) = \sum_{j=1}^J D_\alpha(M_j(D \cap C_j) || M_j(D' \cap C_j)) \\
\leq \alpha P(r') +  \left[ \sum_{j = 2}^J D_\alpha(M_j(D \cap C_j) || M_j(D' \cap C_j)) \right]\\
= \alpha P(r')
\end{align*}

\subsection{Proof of Lemma \ref{lem:PRzCDP-group}}

We will prove the result by strong induction. Define the induction hypothesis for $m \in [J]$:
$$
d_\alpha(M(D_0) \mid M(D_m)) \leq \alpha \left( m \sum_{j=1}^m P(r_j) \right)
$$
The hypothesis holds by definition for $m = 1$. Next, suppose it holds for $m = J - 1$. From Lemma 5.2 in \cite{bun_concentrated_2016}, we have, for any $k, \alpha \in (1, \infty)$:
\begin{equation}
d_\alpha(P || Q) \leq \frac{k\alpha}{k\alpha - 1} d_{\frac{k\alpha - 1}{k - 1}}(P || R) + d_{k\alpha}(R || Q)
\end{equation}
Then using the lemma where $k = J$:
\begin{align*}
    &d_\alpha(M(D_0) || M(D_J)) \\ 
    &\leq \frac{J\alpha}{J\alpha - 1} d_{\frac{J\alpha -1}{J-1}}(M(D_0) || M(D_{J-1})) + d_{J\alpha}(M(D_{J-1}) || M(D_J)) \\
    &\leq \frac{J\alpha}{J\alpha - 1} \left( \frac{J\alpha - 1}{J - 1} \right) \left( (J-1) \sum_{j=1}^{J-1} P(r_j) \right) + J\alpha P(r_J) \\
    &\leq \alpha \left( J\sum_{j=1}^J P(r_j) \right)
\end{align*}
By induction, the result holds for $m = J$. Note that the expression holds without loss of generality for any permutation of $\{ r_1, \dots, r_J \}$. 

\subsection{Proof of Lemma \ref{lem:advanced-PRzCDP-par}}

From Lemma 5.2 in \cite{bun_concentrated_2016}, we have, for $k, \alpha \in (1, \infty)$
\begin{equation}
d_\alpha(P || Q) \leq \frac{k\alpha}{k\alpha - 1} d_{\frac{k\alpha - 1}{k - 1}}(P || R) + d_{k\alpha}(R || Q)
\end{equation}
In iterated form, we have:
\begin{equation}
d_{k^m \alpha}(P || Q) \leq \frac{k^{m+1}\alpha}{k^{m+1}\alpha - 1} d_{\frac{k^{m+1}\alpha - 1}{k - 1}}(P || R) + d_{k^{m+1} \alpha}(R || Q)  
\end{equation}
Using the iterated form, we have:
\begin{align*}
&d_\alpha(M(D_0) || M(D_J)) \\
&\leq \frac{k\alpha}{k\alpha - 1} d_{\frac{k\alpha - 1}{k - 1}}(M(D_0) || M(D_1)) + d_{k\alpha}(M(D_1) || M(D_J)) \\
&\leq \frac{k\alpha}{k\alpha - 1} d_{\frac{k\alpha - 1}{k - 1}}(M(D_0) || M(D_1)) + \\ & \frac{k^2 \alpha}{k^2 \alpha - 1} d_{\frac{k^2\alpha - 1}{k - 1}}(M(D_1) || M(D_2)) + d_{k^2\alpha}(M(D_2) || M(D_J)) \\
&\leq \cdots \\
&\leq \sum_{j=1}^J \frac{k^j \alpha}{k^j\alpha - 1} d_{\frac{k^j \alpha - 1}{k - 1}}(M(D_{j-1})||M(D_j))) \\
&\leq \sum_{j=1}^J \frac{k^j \alpha}{k^j\alpha - 1} \left( \frac{k^j \alpha - 1}{k - 1} \right) P(r_j) \\
&\leq \alpha \sum_{j=1}^J \frac{k^j}{k - 1} P(r_j)
\end{align*}
Since $\frac{k^j}{k-1}$ is an increasing function in $j$ for $j \in [J]$ and $k \in (1, \infty)$, this expression is minimized by the sequence of order statistics defined above by finding the minima over all $k \in (1, \infty)$:
\begin{align*}
&d_\alpha(M(D_0) || M(D_J)) \leq \alpha \sum_{j=1}^J \frac{k^j}{k - 1} P(r_{(j)}) \quad \forall k \in (1, \infty) \\
&\implies d_\alpha(M(D_0) || M(D_J)) \leq \alpha \inf_{k \in (1, \infty)} \sum_{j=1}^J \frac{k^j}{k - 1} P(r_{(j)})
\end{align*}
As an example where the bound improves upon the result from Lemma \ref{lem:PRzCDP-group}, let $J=2$. Then:
\begin{align*}
&d_\alpha(M(D_0) || M(D_2)) \\
&\leq \frac{k\alpha}{k\alpha - 1} d_{\frac{k\alpha - 1}{k - 1}}(M(D_0) || M(D_1)) + d_{k\alpha}(M(D_1) || M(D_2)) \\
&\leq \frac{k\alpha}{k -1} P(r_{(1)}) + k\alpha P(r_{(2)})
\end{align*}
Finding the minimum solution within $k \in (1, \infty)$ yields:
$$
\frac{d}{dk} \left[ \frac{k}{k-1} P(r_{(1)}) + k P(r_{(2)}) \right] = -\frac{P(r_{(1)})}{(k -1)^2} + P(r_{(2)}) = 0 \implies $$ $$   k^* = \sqrt{\frac{P(r_{(1)})}{P(r_{(2)})}} + 1
$$
This implies:
\begin{align*}
d_\alpha(M(D_0) || M(D_2)) \leq \alpha\left[ \frac{k^*}{k^*-1} P(r_{(1)}) + k^* \alpha P(r_{(2)}) \right] \\
= \alpha\left[ \frac{\sqrt{\frac{P(r_{(1)})}{P(r_{(2)})}} + 1}{\sqrt{\frac{P(r_{(1)})}{P(r_{(2)})}}} P(r_{(1)}) + \left( \sqrt{\frac{P(r_{(1)})}{P(r_{(2)})}} + 1 \right) P(r_{(2)}) \right] \\
= \alpha \left[ P(r_{(1)}) + P(r_{(2)}) + 2 \sqrt{P(r_{(1)})P(r_{(2)})}\right]
\end{align*}
Therefore, the result will demonstrate an improvement over Lemma \ref{lem:PRzCDP-group} whenever:
\begin{align*}
2\sqrt{P(r_{(1)}) P(r_{(2)})} < P(r_{(1)}) + P(r_{(2)}) \\ \iff (P(r_{(1)}) - P(r_{(2)}))^2 > 0 \\
\iff P(r_{(1)}) > P(r_{(2)})
\end{align*}
Therefore, such improvement occurs only when the policy yields non-constant privacy loss across records, which agrees with the global zCDP result. Moreover, this result \textit{only} holds when the records are considered to be added or removed in the most favorable order. In the least favorable order, in which the order statistics are reversed, the composition is weaker, i.e.:
$$
\frac{}{} \alpha \sum_{j=1}^J \frac{k^j}{k-1} P(r_{(J+1-j)}) > \alpha J \sum_{j=1}^J P(r_j)
$$
In other words, the simple composition offered by the previous theorem is tighter among order-agnostic group privacy guarantees.\par 
Similar to previous definitions, PRzCDP is also robust to post-processing. That is, if a mechanism $M$ satisfies PRzCDP, then any function applied directly to the output of $M$ also satisfies PRzCDP.

\subsection{Proof of Lemma \ref{lem:pre-processing}}
Let $D$ and $D'$ be neighboring datasets and, without loss of generality, let $A(D') \setminus A(D) = \{ s_1, \dots s_{|A(r)|} \} \subseteq \mathcal{T}^*$. By the group-privacy properties of $\rho$-zCDP:
\begin{align}
    D_\alpha(M(A(D)) || M(A(D'))) \leq \alpha \left( |A(r)|^2 \rho \right)
\end{align}

\end{document}